\documentclass[10pt,a4paper]{article}
\usepackage[latin1]{inputenc}
\usepackage{algorithm}
\usepackage{placeins}
\usepackage{algpseudocode}
\usepackage{amsmath}
\usepackage{amsthm}
\usepackage{amsfonts}
\usepackage{mathdots}
\usepackage{amssymb}
\usepackage{enumerate}
\usepackage{xcolor}
\usepackage[hidelinks]{hyperref}
\hypersetup{colorlinks,linkcolor={red!50!black},citecolor={blue!50!black},urlcolor={green!50!black}}

\allowdisplaybreaks
\bibliographystyle{alpha}

%theoremenvironement
\theoremstyle{plain}
\newtheorem{thm}{Theorem}
\newtheorem{lem}{Lemma}
\newtheorem{pro}{Proposition}
\newtheorem{cor}{Corollary}
\theoremstyle{definition}
\newtheorem{defn}{Definition}
\theoremstyle{remark}
\newtheorem{rem}{Remark}
\newtheorem{exa}{Example}

\author{Tovohery Hajatiana Randrianarisoa\footnote{The author is supported by SNF grant no. 169510}}
\title{Coding Theory using Linear Complexity of \\ Finite Sequences}

\renewcommand{\a}{\alpha}
\newcommand{\N}{\mathbb{N}}
\newcommand{\F}{\mathbb{F}_q}
\newcommand{\e}{\mathbf{e}}
\newcommand{\ai}{(a_i)}
\newcommand{\bi}{(b_i)}
\newcommand{\ci}{(c_i)}
\renewcommand{\L}{\mathfrak{L}}
\renewcommand{\O}{\mathcal{O}}
\renewcommand{\d}{\mathbf{d}}
\newcommand{\C}{\mathcal{C}}
\renewcommand{\a}{\alpha}

\newcommand{\A}{\mathbf{A}}
\newcommand{\0}{\mathbf{0}}
\newcommand{\U}{\mathbf{U}}
\newcommand{\V}{\mathbf{V}}
\newcommand{\X}{\mathbf{X}}
\newcommand{\Y}{\mathbf{Y}}
\renewcommand{\l}{\lambda}
\newcommand{\x}{\mathbf{x}}
\newcommand{\y}{\mathbf{y}}

\begin{document}
\maketitle

\begin{abstract}
We define a metric on $\F^n$ using the linear complexity of finite sequences. We will then develop a coding theory for this metric. We will give a Singleton-like bound and we will give constructions of subspaces of $\F^n$ achieving this bound. We will compute the size of balls with respect to this metric. In other words we will count how many finite sequences have linear complexity bounded by some integer $r$. The paper is motivated in part by the desire to design new code based cryptographic systems.
\end{abstract}

\section{Motivation}\label{sec:1}

As we will explain in this section, the notion of weight of vectors are closely related to notion of linear complexity of the sequence. This motivates us to study the linear complexity of sequences as a new metric. For us to see this relation, let us first recall some notion from linear coding theory using the Hamming metric. 

In most part of this work, unless otherwise specified, we will always work with a finite field $\F$ of size $q$. 

\begin{defn}\label{defn:1}
Let $\x = (x_1,\cdots,x_n)\in \F^n$. We define the weight $\omega(\x)$ of $\x$ to be the number of non-zero entries of $\x$. If $\x$ and $\y$ are two elements of $\F^n$, then we define the distance between $\mathbf{a}$ and $\mathbf{b}$ as $\d(\x,\y)=\omega(\x-\y)$.
\end{defn}

\begin{defn}\label{defn:2}
A linear code $\C$ of length $n$ over $\F$ is a subspace of $\F^n$ paired with the distance $\d$ as in the previous definition. The minimum distance of a linear code is the smallest value of $d(\x,\y)$ for any two distinct codewords of $\C$.
\end{defn}
The most important parameters for a linear code $\C$ are the size of the base field, the length, the dimension and its minimum distance. One has to optimize the choice of these parameters for applications. For example, one wants to construct codes with large dimension and large minimum distance at the same time and the base field should preferably be as small as possible (binary field for example). This is not an easy task as the minimum distance behaves in opposite to the dimension for example. This is explained by the following Singleton bound.
\begin{thm}[Singleton bound]\label{thm:1}
Let $\C$ be a linear code of length $n$ and dimension $k$ as subspace of $\F^n$. Suppose that $d$ is the minimum distance of $\C$. Then,
\[
d\leq n-k+1.
\]
\end{thm}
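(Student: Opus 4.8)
The plan is to use a puncturing (coordinate-deletion) argument, which keeps everything inside the Hamming framework already established and requires no auxiliary objects such as a parity-check matrix. First I would introduce the linear map $\pi\colon \F^n \to \F^{n-d+1}$ that forgets the last $d-1$ coordinates, sending $(x_1,\dots,x_n)$ to $(x_1,\dots,x_{n-d+1})$, and then restrict it to $\C$. Note that $\pi$ is well-defined because $d\le n$ for any code with more than one codeword (no two vectors of $\F^n$ can differ in more than $n$ positions), so that $n-d+1\ge 1$.

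The key step is to show that the restriction $\pi|_\C$ is injective. Suppose $\x,\y \in \C$ satisfy $\pi(\x)=\pi(\y)$. Since $\C$ is a subspace, $\x-\y \in \C$, and $\pi(\x-\y)=\0$ means precisely that $\x-\y$ vanishes on its first $n-d+1$ coordinates. Hence every nonzero entry of $\x-\y$ must lie among the last $d-1$ positions, so $\omega(\x-\y)\le d-1$. If $\x \neq \y$, then $\x-\y$ is a nonzero codeword of weight at most $d-1$, contradicting the fact that $d$ is the minimum distance of $\C$. Therefore $\x=\y$, and $\pi|_\C$ is injective. With injectivity in hand, the image $\pi(\C)$ has the same cardinality as $\C$, namely $q^k$, while it sits inside $\F^{n-d+1}$, which has $q^{n-d+1}$ elements. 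This forces $q^k \le q^{n-d+1}$, and comparing exponents gives $k \le n-d+1$, that is $d \le n-k+1$, as claimed.

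I do not expect a serious obstacle here; the only point requiring genuine care is the passage from minimum \emph{distance} to minimum \emph{weight}, which is exactly where linearity of $\C$ is used: the difference of two codewords is again a codeword, so $d = \min\{\omega(\mathbf{c}) : \mathbf{c}\in\C,\ \mathbf{c}\neq\0\}$. This identity is what lets the contradiction in the injectivity step go through, since it converts the low-weight difference $\x-\y$ into a violation of the minimum-distance hypothesis. Everything else is bookkeeping with cardinalities of vector spaces over $\F$.
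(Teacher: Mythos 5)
Your proof is correct. The paper itself states this classical theorem without proof, but your puncturing argument is exactly the Hamming-metric version of the technique the paper uses to prove its own Singleton bound for the linear-complexity metric (Theorem \ref{thm:5}): a linear projection onto $\F^{n-d+1}$, injectivity on the code forced by the minimum distance, and a cardinality comparison. Indeed, the remark following that proof explicitly observes that choosing the vector $(0,\hdots,0,1)$ there amounts to puncturing at $d-1$ positions, which is precisely your map $\pi$, so your argument and the paper's are the same in substance.
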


Due to this, we want to have codes which maximize both the dimension and the minimum distance of the code. Thus we want to have codes for which the inequality in the above definition is an equality. Such codes are defined as follows.

\begin{defn}\label{defn:3}
A linear code $\C$ which attains the Singleton bound i.e if $\C$ is of dimension $k$, $d$ is its minimum distance and $d=n-k+1$, is called a maximum distance separable (MDS) code. 
\end{defn}
Maximum distance separable codes exist. One easy construction is given by the following. Let $n=q-1$ and let $\a = (\a_1,\cdots,\a_n)$ be a vector where its elements are the non-zero elements of $\F$. We define the evaluation map as
\begin{align*}
ev_\a: \F[x] & \rightarrow\F^n \\
f(x) & \mapsto(f(\a_1),\cdots,f(\a_n))
\end{align*}

Let $\F[x]_{< k}$ be the vector space of all polynomials of degree at most $k-1$. Then the image $\C = ev_\a\left(\F[x]_{< k}\right)$ is an MDS code. This comes from the fact that a polynomial of degree at most $k-1$ can have at most $k-1$ roots. The code we described is called Reed-Solomon code. 

It is this relation between the property of the roots of polynomial which is interesting for us. Let us see the following theorem of K\"onig-Rados. For a proof of this theorem, one can have a look at Chapter 6 of \cite{Lid96}.
\begin{thm}[K\"onig-Rados]\label{thm:2}
Let $f(x)=a_0 +a_1 x + \cdots + a_{q-2}x^{q-2}$ be a polynomial over $\F$. Define the following matrix 
\[
\A = 
\begin{pmatrix}
a_0 & a_1 & \hdots & a_{q-2} \\
a_1 & \iddots & \iddots & a_{0} \\
\vdots & \iddots & \iddots & \vdots \\
a_{q-2} &  a_{0} & \hdots & a_{q-3}
\end{pmatrix}.
\]
Suppose that the rank of $\A$ is equal to $r$. Then the number of roots of $f(x)$ in $\F^*$ is given by $q-1-r$.
\end{thm}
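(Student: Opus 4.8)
The plan is to diagonalise $\A$ by means of the discrete Fourier transform over $\F$, exploiting the fact that $\A$ is essentially a circulant matrix. First I would set $n=q-1$ and observe that, with rows and columns indexed from $0$ to $n-1$, the $(i,j)$ entry of $\A$ is $a_{(i+j)\bmod n}$; in particular it depends only on $i+j \bmod n$. Since $\F^*$ is cyclic of order $n$, I fix a generator $\a$, so that $\{1,\a,\dots,\a^{n-1}\}=\F^*$ is exactly the set of $n$-th roots of unity in $\F$, and these are precisely the candidate roots of $f$ in $\F^*$.

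Next, for each $\zeta\in\F^*$ I introduce the vector $v_\zeta=(1,\zeta,\zeta^2,\dots,\zeta^{n-1})^{\mathsf T}$ and compute $\A v_\zeta$. A short index substitution $m=(i+j)\bmod n$ in the $i$-th coordinate gives
\[
(\A v_\zeta)_i=\sum_{j=0}^{n-1}a_{(i+j)\bmod n}\,\zeta^{j}=\zeta^{-i}\sum_{m=0}^{n-1}a_m\zeta^{m}=f(\zeta)\,\zeta^{-i},
\]
using $\zeta^{n}=1$. Hence $\A v_\zeta=f(\zeta)\,v_{\zeta^{-1}}$, which is the key identity of the proof. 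The vectors $\{v_\zeta:\zeta\in\F^*\}$ are the columns of a Vandermonde matrix with distinct nonzero nodes, so they form a basis of $\F^{n}$.

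The rank count then follows at once. Because $\zeta\mapsto\zeta^{-1}$ permutes $\F^*$, the family $\{v_{\zeta^{-1}}:\zeta\in\F^*\}$ is again the same basis, merely reindexed, and in particular linearly independent. Therefore the image of $\A$ is spanned by $\{f(\zeta)\,v_{\zeta^{-1}}:\zeta\in\F^*\}$, whose dimension equals the number of $\zeta\in\F^*$ with $f(\zeta)\neq 0$. Writing $N$ for the number of roots of $f$ in $\F^*$, this reads $r=\operatorname{rank}\A=(q-1)-N$, and solving gives $N=q-1-r$, as claimed.

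The step I expect to require the most care is purely bookkeeping: tracking the index arithmetic modulo $n$ in the computation of $\A v_\zeta$ and correctly identifying the reflection $\zeta\mapsto\zeta^{-1}$ that appears there. One should also verify explicitly that the Vandermonde matrix is invertible over $\F$, which holds because the nodes $1,\a,\dots,\a^{n-1}$ are pairwise distinct, so that the $v_\zeta$ genuinely form a basis. Granting this, no diagonalisability hypothesis on $\A$ is needed, since the argument reads off the rank directly from a spanning set of the image.
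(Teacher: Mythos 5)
Your proposal is correct and complete. One thing to be aware of: the paper itself gives no proof of this theorem at all --- it defers to Chapter 6 of the cited reference [Lid96] (Lidl--Niederreiter) --- so there is no internal argument to compare against; what you have written is essentially the classical proof of K\"onig--Rados, made cleanly self-contained. Your key identity $\A v_\zeta = f(\zeta)\,v_{\zeta^{-1}}$ is the right diagonalization of the back-circulant matrix $\A_{i,j}=a_{(i+j)\bmod (q-1)}$, and the pleasant feature you exploit (implicitly) is that all the ``eigen-data'' lies in $\F$ itself, since every $\zeta\in\F^*$ satisfies $\zeta^{q-1}=1$; no field extension is needed. The remaining steps all check out: the index substitution is valid because $j\mapsto (i+j)\bmod (q-1)$ permutes the index set and $\zeta^j$ depends only on $j$ modulo $q-1$; the Vandermonde matrix on the distinct nodes of $\F^*$ is invertible, so the $v_\zeta$ form a basis and the image of $\A$ is spanned by the vectors $f(\zeta)v_{\zeta^{-1}}$; and since $\zeta\mapsto\zeta^{-1}$ is a bijection of $\F^*$, the nonzero vectors in that spanning set are nonzero scalar multiples of distinct basis vectors, hence linearly independent, giving $r=(q-1)-N$ exactly as you state.
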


The matrix $\A$ in the above theorem is a circulant matrix. It is easy to see that if its rank is equal to $r$, then the first $r$ rows of $\A$ are linearly independent and the other rows are linear combination of them. Furthermore, this tells us that the coefficients of $f(x)$ satisfy the following property.
\[
a_{i+r} = \sum_{j=0}^{r-1} c_j a_{i+j},\quad \forall i\in \N.
\]
Note that the coefficients $a_0,\cdots,a_{q-2}$ satisfy a recurrence relation of order $r$. Using the definitions which we will see in Section \ref{sec:2}, we say that the coefficients of the polynomials $f(x)$ can be generated by a linear-feedback shift register (LFSR) of length $r$ and this is the minimum possible for $r$. We say that $(a_0,\cdots,a_{q-2})$ has linear complexity $r$. Moreover, our sequence gives a periodic sequence with period $q-1$. To summarize, we have the following theorem, which is a direct consequence of the theorem of K\"onig-Rados.

\begin{thm}\label{thm:3}
Let $f(x) = a_0 +a_1 x + \cdots + a_{q-2}x^{q-2}$ be a polynomial over $\F$. If $f(x)$ has $q-1-r$ roots, then $(a_0,\cdots,a_{q-2})$ has linear complexity $r$ and the evaluation  $(f(\a_1),\cdots,f(\a_n))$ has weight $r$.
\end{thm}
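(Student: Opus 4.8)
The plan is to combine the K\"onig--Rados theorem with a rank characterization of linear complexity, and to settle the weight claim by a direct count of roots.

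First I would apply Theorem \ref{thm:2}: since $f$ has exactly $q-1-r$ roots in $\F^*$, the circulant matrix $\A$ associated to $f$ has rank $r$. The weight claim is then immediate. The evaluation vector $(f(\a_1),\cdots,f(\a_n))$ with $n=q-1$ has a zero in its $i$-th coordinate precisely when $\a_i$ is a root of $f$. As the $\a_i$ run over all of $\F^*$ and $f$ has $q-1-r$ roots there, exactly $q-1-r$ coordinates vanish, so the number of non-zero coordinates is $(q-1)-(q-1-r)=r$; that is, the weight of the evaluation vector equals $r$.

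For the linear complexity I would establish that it equals $\mathrm{rank}(\A)$. The upper bound is the observation already recorded before the statement: from $\mathrm{rank}(\A)=r$ one obtains scalars $c_0,\dots,c_{r-1}$ with the recurrence $a_{i+r}=\sum_{j=0}^{r-1}c_j a_{i+j}$ valid for every $i$, where indices are read modulo $q-1$; this is exactly an LFSR of length $r$ generating the sequence, so the linear complexity is at most $r$. For the matching lower bound, I would show that a recurrence of order $s$ forces every row of $\A$ to lie in the linear span of its first $s$ rows: the circulant structure means that shifting the recurrence expresses row $s+k$ as a fixed linear combination of rows $k,\dots,s+k-1$, so by induction all rows lie in the span of rows $0,\dots,s-1$, giving $\mathrm{rank}(\A)\le s$. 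Hence no recurrence of order strictly less than $r$ can exist, and the linear complexity is exactly $r$.

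The step that needs the most care is this equivalence between ``recurrence of order $s$'' and ``rank at most $s$'', because one must make sure the recurrence remains valid across the period boundary; this is precisely where the \emph{circulant} (rather than merely Toeplitz or Hankel) nature of $\A$ is used. A conceptually clean way to package the whole argument is to regard the row space of $\A$ as a subspace invariant under the cyclic shift $\sigma$: since all rows are shifts $\sigma^i$ of the first row, $\mathrm{rank}(\A)$ equals the degree of the minimal polynomial of the first row under $\sigma$, and that degree simultaneously yields the shortest recurrence, i.e.\ the linear complexity. This single observation delivers both inequalities at once.
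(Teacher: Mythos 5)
Your proof is correct and follows essentially the same route as the paper: Theorem \ref{thm:2} gives $\mathrm{rank}(\A)=r$, the circulant structure converts that rank into the order of the shortest recurrence, and the weight claim is settled by counting roots among the evaluation points. The main difference is one of completeness, in your favour: the paper merely asserts that $r$ ``is the minimum possible'' length of an LFSR generating the coefficients, whereas you actually prove the matching lower bound (a recurrence of order $s$ forces $\mathrm{rank}(\A)\le s$), and your closing reformulation via the shift-invariant span and the minimal polynomial of the cyclic shift is a clean way of obtaining both inequalities at once. One point should be stated more sharply than you do. Your lower bound, and indeed the theorem itself, hold only when ``linear complexity'' refers to the sequence extended \emph{periodically} with period $q-1$, i.e.\ with recurrence indices read modulo $q-1$, as you do throughout; this is what the paper intends (``our sequence gives a periodic sequence with period $q-1$''). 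Under the finite-window notion later formalized in Definition \ref{defn:6}, the statement would be false: for the constant polynomial $f=1$ the coefficient vector is $(1,0,\ldots,0)$, whose finite linear complexity is $1$, while $\A$ is a permutation matrix of rank $q-1$. So the validity ``across the period boundary'' that you flag is not supplied by the circulant shape of $\A$, as your last paragraph suggests; it is a consequence of the periodicity convention, and a finite-window recurrence need not wrap around --- which is exactly why the finite and periodic complexities can differ.
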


Through Theorem \ref{thm:3}, we can relate the linear complexity of a periodic sequence with the weight of a vector. However, we have only periodic sequences. This raises the following question: What happens if we study any type of sequence i.e. we don't require the LFSR to be a periodic sequence with fixed period. We will answer this question in the next sections. First, in Section \ref{sec:2}, we will introduce the notion of linear-feedback shift register. In Section \ref{sec:3}, we will give a coding theory for finite sequences. We will use Section \ref{sec:4} for a separate study on the number of finite sequences which can be generated by an LFSR of given length. Finally, we will conclude with Section \ref{sec:5} and give some future work.

\section{Linear-feedback shift register}\label{sec:2}
Let $\F$ be a finite field with $q$ elements.

\begin{defn}\label{defn:4}
Left $\F$ be a field. A linear feedback shift register (LFSR) of order $l$ over $\F$ is an infinite
sequence $\ai$ over $\F$ such that, there are fixed $c_j\in \F$ 
with,
\[
a_{i+l} = \sum_{j=0}^{l-1} c_j a_{i+j},\quad \forall i\in \N.
\]
The feedback polynomial associated to $\ai$ is
\[
f(z) = z^l - \sum_{j=0}^{l-1} c_j z^j.
\]
\end{defn}

\begin{defn}\label{defn:5}
Let $\ai$ be a LFSR over $\F$. The generating function $A(z)$ associated to 
$\ai$ is the formal power series
\[
A(z) = \sum_{i=0}^\infty a_i z^i.
\]
\end{defn}

One can show (Chapter 8 \cite{Lid96}) that for some polynomial $g(z)$ of degree $l-1$ at most, we have
\[
A(z) = \frac{g(z)}{f^*(z)},
\]
where $f^*$ is the reciprocal polynomial given by
\[
f^*(z) = z^l f\left(\frac{1}{z}\right).
\]

\begin{defn}\label{defn:6}
Given a non-zero finite sequence $\ai = (a_0,\cdots,a_{n-1})\in \F^n$, the linear complexity  $\L(a_i)$ of the sequence is the smallest $l$ such that
\[
a_{i+l} = \sum_{j=0}^{l-1} c_j a_{i+j},\quad \forall i, \; 0\leq i\leq n-1,
\]
for some fixed $c_j\in \F$. 

For a zero sequence, we set the linear complexity to be equal to zero.
\end{defn}

Given a finite sequence, we can compute the shortest LFSR 
producing this sequence. This can be done using the Berlekamp-Massey algorithm in 
$\O(n^2)$ field operations in $\F$ (Chapter 8 of \cite{Lid96}). Furthermore if the linear complexity is $n/2$, then $n$ successive terms of the sequence are enough to uniquely find the shortest shift register. We present the algorithm in Algorithm \ref{algo:1}. On input, we have a sequence $s_0,\cdots,s_{n-1}$ of length $n$. On output, the algorithm generates the order and the feedback polynomial $f(z)$ of the shortest LFSR generating $s_0,\cdots,s_{n-1}$.

\begin{algorithm}[ht!]
\caption{Berlekamp-Massey}\label{algo:1}
\begin{algorithmic}[1]
\Procedure{BERLEKAMP-MASSEY}{$s_0,\cdots,s_{n-1}$}
\State $f(z)\gets 1$, $A(z)\gets 1$, 
\State $L \gets 0$, $m=-1$, $e \gets 1$
\For{$i$ from $0$ to $n-1$}
\State $d \gets s_i + \sum_{j = 1}^{L}f_j s_{i-j}$
\If{$d\neq 0$}
\State $B(z)\gets f(z)$
\State $f(z)\gets f(z) -  (d/e)A(z)z^{i-m}$
\If{$2L\leq i$}
\State $L\gets i+1-L$
\State $m \gets i$
\State $A(z)\gets B(z)$
\State $e\gets d$
\EndIf
\EndIf
\EndFor
\State \Return $L$ and $f(z)$
\EndProcedure
\end{algorithmic}
\end{algorithm}

\begin{pro}\label{pro:1}
Let $\ai = (a_0,\cdots,a_{n-1})$ be a finite sequence of length $n$. Then $\L\ai \leq n$. Furthermore the only sequences attaining the bound upper bound $n$ are of the form $(0,\cdots,0,a)$, with $a\in \F^*$.
\end{pro}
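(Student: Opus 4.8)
The plan is to handle the two assertions separately, starting with the upper bound $\L\ai \le n$. By Definition~\ref{defn:6}, the linear complexity is the smallest $l$ for which a recurrence of order $l$ generates the sequence, so it suffices to exhibit \emph{some} valid recurrence of order $n$. The key observation is that the recurrence relation
\[
a_{i+l} = \sum_{j=0}^{l-1} c_j a_{i+j}
\]
is only required to hold for those indices $i$ with $0 \le i \le n-1$ such that $i+l$ is a legitimate index of the finite sequence, i.e.\ $i+l \le n-1$. Taking $l = n$, there is \emph{no} constraint to satisfy at all: the smallest index $i+n$ we could demand is $n$, which already exceeds $n-1$. Hence the trivial recurrence (with any choice of $c_j$, say all zero) of order $n$ vacuously generates the sequence, so $\L\ai \le n$. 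I would state this vacuity argument carefully, since it is the conceptual crux: the finiteness of the sequence is exactly what makes order $n$ always admissible.

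Next I would prove that $\L\ai = n$ forces $\ai$ to have the stated shape. The cleanest route is the contrapositive: I would show that \emph{any} sequence not of the form $(0,\dots,0,a)$ with $a \in \F^*$ admits a recurrence of order strictly less than $n$, hence has complexity at most $n-1$. First dispose of the zero sequence, whose complexity is $0 < n$ by definition. For a nonzero sequence, let $t$ be the largest index with $a_t \ne 0$. If $t < n-1$, then the last entry $a_{n-1}$ is zero; I would argue that a recurrence of order $\le n-1$ suffices, essentially because the genuinely ``new'' information in the sequence is confined to positions $0,\dots,t$ and the trailing zeros impose no additional complexity. If instead $t = n-1$ but the sequence is not of the special form, then some earlier entry $a_s$ with $s < n-1$ is nonzero; the idea is to use an order-$(n-1)$ recurrence expressing $a_{n-1}$ as a multiple of such an earlier nonzero term, which is possible precisely because that earlier term does not vanish.

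The main obstacle I anticipate is the second direction, and specifically making the ``trailing zeros reduce complexity'' and ``earlier nonzero term lets us write a shorter recurrence'' arguments rigorous in terms of Definition~\ref{defn:6}. The subtlety is that a recurrence of order $l$ must produce \emph{every} term $a_l, a_{l+1}, \dots, a_{n-1}$ from its predecessors simultaneously with one fixed coefficient vector $(c_0,\dots,c_{l-1})$, so I cannot reason term-by-term in isolation. I expect the convenient formalization to be through the matrix of the recurrence: asking for an order-$l$ recurrence is equivalent to solvability of a linear system whose matrix has rows $(a_i, a_{i+1}, \dots, a_{i+l})$, and complexity $< n$ is equivalent to a suitable rank deficiency. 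Verifying that the special sequences $(0,\dots,0,a)$ genuinely force full complexity $n$ then reduces to checking that for such a sequence no order-$(n-1)$ recurrence can produce the final nonzero term $a$ from its $n-1$ predecessors, all of which are zero: the required equation $a = \sum_{j} c_j \cdot 0 = 0$ is unsatisfiable since $a \ne 0$. Assembling these pieces — the vacuous bound, the contrapositive reduction, and this final impossibility check — completes both claims.
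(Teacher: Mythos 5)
Your proposal is correct and takes essentially the same route as the paper: the upper bound comes from the vacuity of an order-$n$ recurrence on a length-$n$ sequence, and the converse is the paper's scaling trick --- if some $a_j \neq 0$ with $j \leq n-2$, the single order-$(n-1)$ constraint $a_{n-1} = \sum_{i=0}^{n-2} c_i a_i$ is satisfied by taking $c_j = a_{n-1}/a_j$ and all other $c_i = 0$ (this one choice also covers your case $t < n-1$, where $a_{n-1}=0$), while for $(0,\ldots,0,a)$ that constraint reads $a = 0$ and fails. The rank-deficiency machinery you anticipate needing is unnecessary, since an order-$(n-1)$ recurrence on a length-$n$ sequence imposes exactly one linear equation; the only point worth adding is that your final impossibility check should be run for every order $l < n$ (the identical argument works, as the constraint at $i = n-1-l$ again forces $a=0$), not just $l = n-1$.
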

\begin{proof}
We can just use a LFSR with $\ai$ as initial state so that the maximum linear complexity is at most $n$. It is obvious that $(0,\cdots,0,a)$ has linear complexity $n$. Finally, if $\ai = (a_0,\cdots,a_{n-1})$ is such that $a_j \neq 0$ for some $j$ with $0\leq j\leq n-2$. By taking $c_i = 0$ except when $i = j$, where $c_j = a_{n-1}/a_j$, we prove that $a_{n-1} = \sum_{j=0}^{n-2} c_j a_{j}$ so that the linear complexity is at least $n-1$.
\end{proof}

The key property of the linear complexity of sequences which will be used later is the following.
\begin{thm}\label{thm:4}
Let $\ai$ and $\bi$ be two finite sequences. If $\ci = \ai + \bi$, then
\[
\L\ci\leq \L\ai + \L\bi.
\]
\end{thm}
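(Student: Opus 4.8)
The plan is to translate the statement about linear complexity into a statement about generating polynomials, and then reduce the whole question to a degree count for a product of polynomials. Write $l_a = \L\ai$ and $l_b = \L\bi$, and let $A(z) = \sum_{i=0}^{n-1} a_i z^i$, $B(z) = \sum_{i=0}^{n-1} b_i z^i$ and $C(z) = A(z) + B(z)$ be the generating polynomials of the three length-$n$ sequences. If either sequence is zero the inequality is immediate, and if $l_a + l_b \geq n$ it follows at once from Proposition \ref{pro:1}; so I may assume both complexities are positive and $l_a + l_b < n$.

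First I would record the exact finite-length analogue of the relation $A(z) = g(z)/f^*(z)$. Let $f_a(z)$ be the feedback polynomial of a shortest LFSR for $\ai$ and $f_a^*(z)$ its reciprocal, a polynomial of degree $l_a$ with constant term $1$. The recurrence defining $\L\ai$ says precisely that the coefficient of $z^m$ in $f_a^*(z)A(z)$ vanishes for every $m$ with $l_a \leq m \leq n-1$. Since $\deg\big(f_a^*(z)A(z)\big) \leq (n-1)+l_a$, this lets me write
\[
f_a^*(z)\,A(z) = g_a(z) + z^n h_a(z),
\]
where $\deg g_a < l_a$ collects the surviving low-order terms and $\deg h_a < l_a$ collects the boundary terms of degree at least $n$. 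The identical statement holds for $\bi$ with a reciprocal polynomial $f_b^*(z)$ of degree $l_b$.

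Next I would multiply. The polynomial $f_a^*(z)f_b^*(z)$ has degree $l_a + l_b$ and constant term $1$, so it is a legitimate connection polynomial for a LFSR of order $l_a+l_b$. By bilinearity,
\[
f_a^*(z)f_b^*(z)\,C(z) = f_b^*(z)\big(g_a(z)+z^n h_a(z)\big) + f_a^*(z)\big(g_b(z)+z^n h_b(z)\big).
\]
Collecting the terms of degree below $n$ gives $g_c(z) := f_b^*(z)g_a(z) + f_a^*(z)g_b(z)$, and a degree count shows $\deg g_c < l_a + l_b$, while every remaining term carries a factor $z^n$. Hence $f_a^*(z)f_b^*(z)\,C(z) = g_c(z) + z^n h_c(z)$ with $\deg g_c < l_a + l_b$. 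Reading off coefficients, the coefficient of $z^m$ in this product is $0$ for every $m$ with $l_a+l_b \leq m \leq n-1$, which is exactly the assertion that $\ci$ satisfies a linear recurrence of order $l_a+l_b$ in the sense of Definition \ref{defn:6}. Therefore $\L\ci \leq l_a + l_b = \L\ai + \L\bi$.

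The routine part is the degree bookkeeping; the step that needs the most care is the finite-length characterization in the second paragraph, in particular justifying that the vanishing of the middle coefficients of $f_a^*(z)A(z)$ is genuinely equivalent to the recurrence of Definition \ref{defn:6}, so that the product argument detects a recurrence valid on the whole truncated range $l_a+l_b \leq m \leq n-1$ rather than only on an infinite tail. Once that equivalence is pinned down, passing from the two factors to their product is purely formal.
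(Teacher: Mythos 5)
Your proof is correct, and at its core it is the same argument as the paper's: both multiply the two reciprocal feedback polynomials and observe that $f_a^*(z)f_b^*(z)$ annihilates the sum because the combined numerator $f_b^*(z)g_a(z)+f_a^*(z)g_b(z)$ has degree less than $\L\ai+\L\bi$. The only difference is presentational: the paper phrases this with formal power series $g(z)/f^*(z)$ of the infinite LFSR extensions, whereas you keep everything at finite length with explicit truncation terms $z^n h(z)$, which has the merit of directly verifying the finite-window recurrence of Definition~\ref{defn:6} (a finite-to-infinite step the paper's proof leaves implicit).
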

\begin{proof}
Suppose that the generating function of the LFSR generating $\ai$ and $\bi$ are respectively
\[
\frac{g_a(z)}{f_a^*(z)},\quad \text{ and } \frac{g_b(z)}{f_b^*(z)}.
\]
Then the generating function of the LFSR generating $\ci$ is
\[
\frac{g_a(z)f_b^*(z)+g_b(z)f_a^*(z)}{f_a^*(z)f_b^*(z)}.
\]
And therefore, $\ci$ can be generated by a LFSR with feedback 
polynomial $f_a(z)f_b(z)$. Therefore the linear complexity is at most $\L\ai +\L\bi$.
\end{proof}

\section{A coding theory for finite sequences using the linear complexity}\label{sec:3}
Let $\F$ be a finite field and let $n$ be a positive integer. We will consider sets of length $n$.

\begin{defn}\label{defn:7}
Let $\ai = (a_0,\cdots,a_{n-1})\in \F^n$ and $\bi = (b_0,\cdots,b_{n-1})\in\F^n$ be two finite sequences of $n$ elements of $\F$ each. Then we define a distance on $\F^n$ by the following,
\[
\d(\ai,\bi) = \L(\ai - \bi),
\]
where $\L(0) = 0$.
\end{defn}

This map defines indeed a distance:
\begin{enumerate}[(i)]
\item By definition $\d(\ai,\bi) = 0 \Leftrightarrow \ai = \bi$.
\item By definition of $\L$, $\L\ai \geq 0$.
\item $\d(\ai,\bi) = \d(\bi,\ai)$.
\item For the triangular inequality,
\begin{align*}
\d(\ai,\bi) &= \L(\ai - \bi) \\
& = \L(\ai -\ci +\ci - \bi) \\
& \leq \L(\ai -\ci) + \L(\ci - \bi), \text{ by Theorem \ref{thm:4}}\\
& = \d(\ai,\ci) + \d(\ci,\bi).
\end{align*}
\end{enumerate}

Like in coding theory, we can define a subset of $\F^n$ and define the metric $\d$ on this set. We will derive basic coding results for this context.

\begin{defn}\label{defn:8}
Let $S$ be a subset of $\F^n$. The minimum distance $d$ of $S$ is the minimum 
of $\d(\ai,\bi)$ for distinct $\ai,\bi\in S$. We will describe the parameters 
of $S$ as $[n,\sharp S,d]$. In case $S$ is a $k$-dimensional subspace of $\F^n$, 
then, by additivity, $d$ is the minimum linear complexity of the non-zero sequences in $S$ and we 
will write $[n,k,d]$.
\end{defn}

For the next steps we want to have a look at the bounds on a  $[n,\sharp S,d]$-subset of $\F^n$.
\begin{thm}[Singleton bound]\label{thm:5}
Let $\F$ be a finite field of size $q$. Let $S\subset \F^n$ be a set of finite sequence over $\F$ of length $n$, with minimum distance $d$. Then $\sharp S\leq q^{n-d+1}$.
\end{thm}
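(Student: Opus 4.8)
The plan is to mimic the classical proof of the Singleton bound, where one shows that a suitable coordinate projection is injective on the code. The subtlety, already highlighted by Proposition \ref{pro:1}, is that linear complexity is not symmetric in the positions: a sequence whose nonzero entries sit at the end, such as $(0,\dots,0,a)$, has maximal complexity, whereas one whose nonzero entries sit at the beginning has small complexity. Consequently the projection must discard the \emph{tail} of the sequence, not the head, and this is the one design choice on which the whole argument turns.

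Concretely, I would consider the linear map
\[
\pi : \F^n \longrightarrow \F^{n-d+1}, \qquad \pi(a_0,\dots,a_{n-1}) = (a_{d-1},a_d,\dots,a_{n-1}),
\]
which forgets the first $d-1$ coordinates. Its kernel is exactly the space $W$ of sequences supported on the positions $0,\dots,d-2$, i.e. those of the form $(w_0,\dots,w_{d-2},0,\dots,0)$, a subspace of dimension $d-1$.

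The key step, and the heart of the argument, is to show that every element of $W$ has linear complexity at most $d-1$. For this I would exhibit an explicit LFSR of order $d-1$ generating such a sequence: take the register with initial state $(w_0,\dots,w_{d-2})$ and all feedback coefficients equal to zero. Its recurrence reads $a_{i+(d-1)}=0$ for all $i\in\N$, so it produces exactly $(w_0,\dots,w_{d-2},0,0,\dots)$, whose truncation to length $n$ is the given element of $W$. Hence $\L(w)\leq d-1$ for every $w\in W$. This is the place to be careful, since the naive projection onto the first coordinates would leave differences of the form $(0,\dots,0,a)$ in the kernel, and those have complexity $n$, not $\leq d-1$.

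Granting this, the conclusion is immediate. If $\pi$ failed to be injective on $S$, there would be distinct $\ai,\bi\in S$ with $\pi(\ai)=\pi(\bi)$, so that $\ci=\ai-\bi$ is a nonzero element of $W$ and therefore satisfies $\d(\ai,\bi)=\L(\ci)\leq d-1$, contradicting the assumption that the minimum distance of $S$ equals $d$. Thus $\pi|_S$ is injective, and since $S$ need not be a subspace we simply count images to get $\sharp S = \sharp\,\pi(S)\leq \sharp\,\F^{n-d+1}=q^{n-d+1}$. I expect everything apart from the complexity bound on $W$ to be entirely routine.
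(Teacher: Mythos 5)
Your proof is correct and takes essentially the same approach as the paper: both construct a linear map $\F^n\to\F^{n-d+1}$ that must be injective on $S$ because every nonzero element of its kernel has linear complexity at most $d-1$, contradicting the minimum distance. In fact your specific choice (dropping the first $d-1$ coordinates, with kernel handled by the zero-feedback LFSR of order $d-1$) is exactly the variant the paper itself points out in the remark following its proof, where the row vector $(0,\dots,0,1)$ replaces the all-ones vector $(1,\dots,1)$ used in the paper's map $P$.
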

\begin{proof}
Let us define the following linear map $P$ as
\begin{align*}
P:\F^n&\rightarrow \F^{n-d+1} \\
(a_0,\cdots,a_{n-1})&\rightarrow \begin{pmatrix}
1 & \hdots & 1
\end{pmatrix}\begin{pmatrix}
a_0 & \hdots & a_{n-d} \\
\vdots & \ddots & \vdots \\
a_{d-1}& \hdots & a_{n-1}
\end{pmatrix}
\end{align*}
The restriction of this map must be injective on $S$. Otherwise if two sequences
$\ai$ and $\bi$ are mapped to the same image, then $\ai-\bi$ is mapped to zero.
But this would imply that $\L\left(\ai-\bi\right)\leq d-1$. This is in
contradiction with the minimum distance of $S$. 
By the injectivity, we must have that $\sharp S\leq \sharp(\F^{n-d+1})$.
\end{proof}

Note that in this proof, instead of using $\begin{pmatrix}
1 & \hdots & 1
\end{pmatrix}$, we can use any vector with $1$ as last entry. These operations are equivalent to the puncturing operation on codes. Namely using $\begin{pmatrix}
0 & \hdots & 0 & 1
\end{pmatrix}$ is just puncturing at the first $d-1$ positions.
\begin{rem}\label{rem:1}
In case $S$ is linear of dimension $k$ over $\F$, then $k\leq n-d+1$.
\end{rem}

\begin{defn}[Optimal set of sequences - OSS]\label{defn:9}
We call a set of sequences $S$ optimal if the minimum distance of $S$ reaches the bound of the previous theorem i.e. if $S$ has elements of length $n$ and minimum distance $d$ and $\sharp S = q^{n-d+1}$.
\end{defn}

\begin{exa}\label{exa:1}
Let $S$ be the set of sequences of length $n$ over a finite field $\F$ defined by 
\[
S = \lbrace (0,\cdots,0,a_1,\cdots,a_k): a_i\in \F \rbrace.
\]
Then, $S$ is an optimal set of linear sequences of dimension $k$. That is because the sequences cannot be generated by a LFSR of length smaller than $n-k+1$ except when it is the zero sequence.
\end{exa}

The nice property of using the set of sequences with the linear complexity as metric is that, in opposite to maximum distance separable codes, we can have optimal set of sequences for any parameters. We can make the construction, even for the binary field.

\subsection*{Decoding of OSS}
The decoding of OSS given in Example \ref{exa:1} is straightforward. First let us look at the unique decoding property.

\begin{pro}\label{pro:2}
Suppose that $S$ is an $[n,M,d]$ set of sequences. Suppose that $\y\in \F$ is equal to $\x+\e$, where $\x\in S$ and $\L(\e)<\frac{d}{2}$. Then, the decomposition $\x+\e$ is unique.
\end{pro}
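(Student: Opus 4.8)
The plan is to argue by contradiction using the standard unique-decoding strategy. Suppose $\y$ admits two such decompositions, $\y = \x + \e = \x' + \e'$, with $\x, \x' \in S$ and $\L(\e), \L(\e') < d/2$. The first step is to rearrange this identity into $\x - \x' = \e' - \e$, which collects the two codewords on one side and the two error sequences on the other.

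Next I would bound the linear complexity of the right-hand side. Writing $\e' - \e = \e' + (-\e)$ and applying the subadditivity of $\L$ from Theorem \ref{thm:4}, I obtain $\L(\e' - \e) \le \L(\e') + \L(-\e)$. Here I would invoke the elementary fact that $\L(-\e) = \L(\e)$, since any LFSR generating $\e$ also generates its scalar multiple $-\e$ (scaling a sequence by a nonzero constant preserves every recurrence relation it satisfies). Combining these gives $\L(\x - \x') = \L(\e' - \e) \le \L(\e') + \L(\e) < d/2 + d/2 = d$.

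On the other hand, if $\x \neq \x'$, then since both lie in $S$, Definition \ref{defn:8} forces $\L(\x - \x') = \d(\x, \x') \ge d$, contradicting the strict bound just obtained. Hence $\x = \x'$, and substituting back into $\y = \x + \e = \x' + \e'$ yields $\e = \e'$, so the decomposition is unique.

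As for difficulties, the argument is essentially routine once subadditivity is available; the only point that needs a moment's care is the observation that $\L$ is invariant under nonzero scaling, which is precisely what lets the two half-$d$ bounds combine. The strictness of the hypothesis $\L(\e) < d/2$ (rather than $\le$) is exactly what is needed to reach $\L(\x - \x') < d$ and thereby the contradiction with the minimum distance.
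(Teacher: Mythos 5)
Your proof is correct and follows essentially the same route as the paper: rearrange the two decompositions into $\x - \x' = \e' - \e$, apply the subadditivity of $\L$ from Theorem \ref{thm:4}, and contradict the minimum distance. The only difference is that you explicitly justify $\L(-\e) = \L(\e)$, a small point the paper's proof uses silently when writing $\L(\e_2 - \e_1) \leq \L(\e_2) + \L(\e_1)$.
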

\begin{proof}
If $ \y = \x_1+\e_1 = \y_2+\e_2$, then $\x_1-\x_2 = \e_2-\e_1$. Therefore $d(x_1,x_2) = \L(\e_2-\e_1)$. By Theorem \ref{thm:4}, $d(x_1,x_2)\leq \L(\e_2)+\L(\e_1)<d$. This is in contradiction with the minimum distance of $S$.
\end{proof}

Let $S$, of dimension $k$, be the OSS in Example \ref{exa:1}. Suppose that we know $\y = \x+\e$ with $\x\in S$ and $\L(\e)<\frac{n-k+1}{2}$. 
By Proposition \ref{pro:2}, we know that $\e$ is unique. Since the $n-k$ first entries of $\x$ are equal to zero. Then we know the first $n-k$ entries of $\e$. Now, since $\L(\e)<\frac{n-k+1}{2}$, then we can uniquely recover the LFSR generating $\e$ by using the Berlekamp-Massey algorithm on the first $n-k$ entries of $e$. We are therefore able to produce the whole $\e$ and then we compute $\x = \y-\e$.

\begin{rem}\label{rem:2}
We can modify the above decoding algorithm to get a decoding algorithm for the Reed-Solomon code in Section \ref{sec:1}. The extra step is just that we need to interpolate a received codewords first to get a polynomial $f(x)$ of degree $q-2$ at most. After this we apply the decoding algorithm for the OSS we gave above on the coefficients of this polynomial $f(x)$. Notice that the Berlekamp-Massey in this case is applied to the last coefficients of the polynomial $f(x)$.

\end{rem}

\section{Number of finite sequences generated by a LFSR with fixed length}\label{sec:4}

LFSR already has applications in cryptography. For instance, it is used when one wants to generate random keys. As we have seen, one can compute the linear complexity of a sequence using the Berlekamp-Massey algorithm. Thus, if a sequence has small linear complexity, one can easily find a LFSR generating this sequence. Due to this fact, we usually want to have sequences with large linear complexity. Therefore, one important question is to know how many finite sequences  have large linear complexity. Another motivation for this section is also that knowing the number of sequences with a given linear complexity is important for the security aspect of a code-based cryptosystem using linear complexity as metric.

\begin{lem}\label{lem:1}
Let $\ai$ be an infinite sequence. If $\ai$ can be generated by a LFSR of length $n$, then $\ai$ can be generated by a LFSR of length $i$, for any $i\geq n$.
\end{lem}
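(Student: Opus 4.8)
The statement is essentially a bookkeeping fact about the defining recurrence, and the plan is to prove it by reducing to a single step from length $n$ to length $n+1$ and then inducting on $i$. First I would unwind the hypothesis: by Definition~\ref{defn:4} there are constants $c_0,\dots,c_{n-1}\in\F$ with $a_{m+n}=\sum_{j=0}^{n-1}c_j a_{m+j}$ for every $m\in\N$, and the associated feedback polynomial is $f(z)=z^n-\sum_{j=0}^{n-1}c_j z^j$.

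The core step is to shift this identity by one index. Since it holds for all $m$, replacing $m$ by $m+1$ gives $a_{m+n+1}=\sum_{j=0}^{n-1}c_j a_{m+1+j}=\sum_{j=1}^{n}c_{j-1}a_{m+j}$ for every $m\in\N$. Putting $d_0=0$ and $d_j=c_{j-1}$ for $1\le j\le n$, this is exactly a recurrence $a_{m+n+1}=\sum_{j=0}^{n}d_j a_{m+j}$ of order $n+1$, i.e. a LFSR of length $n+1$ generating $\ai$; on feedback polynomials this step replaces $f(z)$ by $z\,f(z)$. Iterating $i-n$ times---or, more cleanly, running an induction on $i\ge n$---produces a LFSR of length $i$ generating $\ai$, with feedback polynomial $z^{i-n}f(z)$.

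As a consistency check I would also note the generating-function picture recalled in Section~\ref{sec:2}: the longer feedback polynomial $z^{i-n}f(z)$ has the same reciprocal $f^{*}(z)$ as $f(z)$, so the generating function $A(z)=g(z)/f^{*}(z)$ is literally unchanged and is now read off from a degree-$i$ feedback polynomial, confirming the same conclusion.

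There is no serious obstacle here; the one point I would take care to state explicitly is the range of validity of the shifted recurrence. The argument works precisely because the original identity is assumed at \emph{every} $m\in\N$, so after the shift the index $m+1$ still ranges over all of $\N$ and nothing is lost at the initial terms. This is exactly where the hypothesis that $\ai$ is infinite is used: for a finite sequence the shift would demand a value of the recurrence one step beyond the available data, which is the delicate point that this lemma quietly avoids.
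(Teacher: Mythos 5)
Your core argument is correct and is essentially the paper's own proof: the paper prepends zeros to the coefficient vector, passing from $(c_0,\dots,c_{n-1})$ to $(0,\dots,0,c_0,\dots,c_{n-1})$, which is exactly your replacement of the feedback polynomial $f(z)$ by $z^{i-n}f(z)$, done in one shot rather than one zero at a time by induction.

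One correction, though, to your closing remark. The claim that for a \emph{finite} sequence the shift ``would demand a value of the recurrence one step beyond the available data'' is not right. If $(a_0,\dots,a_{N-1})$ satisfies an order-$n$ recurrence for every $m$ with $m+n\leq N-1$, then the shifted order-$(n+1)$ recurrence at index $i$ is just the original recurrence at index $m=i+1$, and it is only required for those $i$ with $i+(n+1)\leq N-1$, i.e.\ precisely for those $m$ with $m+n\leq N-1$; nothing beyond the available data is ever invoked. So the analogue of the lemma holds verbatim for finite sequences (with each recurrence imposed wherever its indices stay in range), and the paper in fact relies on this: in Section~\ref{sec:4} it applies Lemma~\ref{lem:1} to the \emph{finite} sequences in $B(n,r)$ to reduce the count to matrices $\A$ built from an order-$r$ recurrence. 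The hypothesis that $\ai$ is infinite is a convenience of formulation, not the crux of the argument.
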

\begin{proof}
For a proof of this, if $c_1,\cdots,c_{n-1}$ are the coefficients of the LFSR of length $n$, then $0,\cdots,0,c_1,\cdots,c_{n-1}$ are the larger LFSRs.
\end{proof}

By Lemma \ref{lem:1}, we can just study the number of sequences which can be generated by a LFSRs of length $n$ to know the number of sequences which has linear complexity smaller or equal to $n$. Studying sequences which can be generated by a LFSRs of length $n$ can be in turn translated to studying some matrix $\A$ of the form
\[
\A = \begin{pmatrix}
a_0 & a_1 & \hdots & \hdots & a_{n-r-1} \\
a_1 & \iddots & \iddots & a_{n-r-1} & a_{n-r} \\
\vdots & \iddots & \iddots & \iddots & \vdots \\
a_r &  a_{r+1} & \hdots & \hdots & a_{n-1}
\end{pmatrix}
\]
We just need the condition that the last row is a linear combination of the previous rows. The matrices with the form of $\A$ are called are called Hankel matrices when they are square matrices. In \cite{Day60}, Daykin called the general rectangular matrices {\em persymmetric matrices}. To go further with our counting, we will need the following reduction method as used by Daykin in \cite{Day60}.

Fix and integer $u$ such that $0\leq u< \min(r,n-r-1)$. We define the following set
\[
\mathcal{A}_u = \lbrace(0,\cdots,0,a_u,\cdots,a_{n-1}): a_u\neq 0\rbrace.
\]
Then for $\ai\in \mathcal{A}_u$, we recursively define $\theta_i$, $i=0,\cdots, n-u-1$ by
\[
\begin{cases}
a_u\theta_0 = 1 \\
\sum_{l=0}^i  a_{u+l}\theta_{i-l} = 0
\end{cases}.
\]

Now define the following matrices
 
\[
\U = \begin{pmatrix}
\theta_0 & 0 & 0 & \hdots & 0\\
\theta_1 & \theta_0 & 0 & \ddots & \vdots \\
\vdots & \ddots & \ddots & \ddots & 0 \\
\vdots & \ddots & \ddots & \ddots & 0 \\
\theta_{r} & \hdots & \hdots & \theta_1 & \theta_0
\end{pmatrix},\quad 
\V = \begin{pmatrix}
\theta_0 & \theta_1 & \hdots & \hdots & \theta_{n-r-1}\\
0 & \theta_0 & \theta_1 & \ddots & \vdots\\
0 & \ddots & \ddots & \ddots & \vdots \\
\vdots & \ddots & \ddots & \ddots & \theta_1 \\
0 & \hdots & 0 & 0 & \theta_0
\end{pmatrix}
\]
\[
\X = \begin{pmatrix}
0 & \hdots & 0 & 0 & \theta_0 \\
\vdots & \iddots & 0 & \theta_0 & \theta_1 \\
0 & \iddots & \iddots & \iddots & \vdots \\
0 & \theta_0 & \iddots & \iddots & \vdots \\ 
\theta_0 & \theta_1 & \hdots & \hdots & \theta_{u}\\ 
\end{pmatrix},\quad 
\Y = \begin{pmatrix}
\theta_{u+2}  & \theta_{u+3} & \hdots & \hdots & \theta_{n-r} \\
\theta_{u+3} & \iddots & \iddots & \theta_{n-r-1} & \theta_{n-r+1} \\
\vdots & \iddots & \iddots & \iddots & \vdots \\
\vdots & \theta_{r+1} & \iddots & \iddots & \theta_{n-u-2} \\
\theta_{r+1} & \theta_{r+2} & \hdots & \theta_{n-u-2} & \theta_{n-u-1}
\end{pmatrix}
\]
\begin{lem}\label{lem:2}
For a fixed $u$ with $0\leq u< \min(r,n-r-1)$, there is a bijection between $\mathcal{A}_u$ and the set $\lbrace (\theta_0,\cdots,\theta_{n-u-1}): \theta_0\in \F^*, \theta_i\in \F , 1\leq i\leq n-u-1\rbrace$ given by 
\[
\U\A\V = \begin{pmatrix}
\X & \0 \\
\0 & -\Y
\end{pmatrix}
\]
\end{lem}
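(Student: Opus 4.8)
The plan is to split the statement into two tasks: checking that the assignment $(a_i)\mapsto(\theta_0,\dots,\theta_{n-u-1})$ really is a bijection onto $\{\theta_0\in\F^*,\ \theta_1,\dots,\theta_{n-u-1}\in\F\}$, and verifying the displayed matrix identity, which is the substantive part. For the bijection I would first reinterpret the defining recursion as a power-series identity. Writing $\tilde A(z)=\sum_{j=0}^{n-u-1}a_{u+j}z^j$ and $\Theta(z)=\sum_{i=0}^{n-u-1}\theta_i z^i$, the relations $a_u\theta_0=1$ and $\sum_{l=0}^{i}a_{u+l}\theta_{i-l}=0$ say precisely that
\[
\tilde A(z)\,\Theta(z)\equiv 1 \pmod{z^{\,n-u}}.
\]
Since $(a_i)\in\mathcal A_u$ forces $a_u\neq0$, the series $\tilde A$ is invertible modulo $z^{n-u}$, so the $\theta_i$ are uniquely determined with $\theta_0=a_u^{-1}\neq0$; conversely any tuple with $\theta_0\neq0$ gives an invertible $\Theta$, and $\tilde A=\Theta^{-1}\bmod z^{n-u}$ recovers $a_u,\dots,a_{n-1}$ (prepending $u$ zeros, with $a_u=\theta_0^{-1}\neq0$ keeping us in $\mathcal A_u$). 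These two assignments are mutually inverse, so the map is a bijection; both sets have $(q-1)q^{\,n-u-1}$ elements, a useful sanity check.

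For the matrix identity I would compute the entries of $\U\A\V$ through a bivariate generating function. A direct expansion using $\U_{ij}=\theta_{i-j}$, $\A_{jm}=a_{j+m}$ and $\V_{mk}=\theta_{k-m}$ gives $(\U\A\V)_{ik}=\sum_{p,m}\theta_{i-p}\,\theta_{k-m}\,a_{p+m}$, and the idea is to package all entries into $F(x,y)=\sum_{i,k}(\U\A\V)_{ik}x^i y^k$. Interchanging summation and using $\sum_{i\ge p}\theta_{i-p}x^i=x^p\Theta(x)$ together with the elementary identity $\sum_{p,m\ge0}a_{p+m}x^p y^m=\dfrac{xA(x)-yA(y)}{x-y}$, where $A(z)=\sum_s a_s z^s$, I obtain the clean closed form
\[
F(x,y)=\Theta(x)\,\Theta(y)\,\frac{xA(x)-yA(y)}{x-y}.
\]
Now I would feed in the inversion relation: since $A(z)=z^u\tilde A(z)$, the identity above gives $\Theta(z)A(z)\equiv z^u\pmod{z^{\,n}}$, hence $\Theta(x)\,xA(x)\equiv x^{u+1}$ and $\Theta(y)\,yA(y)\equiv y^{u+1}$ up to errors of $x$- (resp.\ $y$-)degree $\ge n+1$. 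Substituting, $F$ should agree with
\[
G(x,y)=\frac{\Theta(y)x^{u+1}-\Theta(x)y^{u+1}}{x-y}
\]
in every coefficient $x^i y^k$ with $i+k\le n-1$, which is exactly the range $0\le i\le r$, $0\le k\le n-r-1$ that indexes $\U\A\V$. Finally I would expand $G$ by writing $\Theta(x)=\sum_a\theta_a x^a$, splitting $\dfrac{x^{u+1}y^c-x^c y^{u+1}}{x-y}$ according to whether $c\le u$, $c=u+1$, or $c\ge u+2$, and reading off $[x^i y^k]$: this yields $\theta_{i+k-u}$ on the block $0\le i,k\le u$ (the matrix $\X$), yields $0$ on the two off-diagonal blocks, and yields $-\theta_{i+k-u}$ on the block $u+1\le i,k$ (the matrix $-\Y$), matching the claimed right-hand side exactly.

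The two steps needing the most care are the truncation bookkeeping and the coefficient extraction, and I expect the former to be the real obstacle. To control it I would make the error explicit: writing $\Theta(x)\,xA(x)=x^{u+1}+x^{n+1}R(x)$, one gets that $(x-y)(F-G)$ equals $\Theta(y)x^{n+1}R(x)-\Theta(x)y^{n+1}R(y)$, every monomial of which has total degree $\ge n+1$; since this is divisible by $x-y$, the quotient $F-G$ has all monomials of total degree $\ge n$ and so contributes nothing when $i+k\le n-1$. For the extraction, the fiddly point is matching indices and signs across the three cases: the vanishing of the $c=u+1$ term is what forces the off-diagonal blocks to be zero, while the switch between $c\le u$ and $c\ge u+2$ produces the opposite signs and the common index shift $i+k-u$ that identifies the anti-triangular $\X$ and the persymmetric $\Y$. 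Once these points are settled, the identity follows, and because $\U$ and $\V$ are triangular with diagonal $\theta_0\neq0$ they are invertible, so the identity simultaneously re-proves the bijection and exhibits it concretely.
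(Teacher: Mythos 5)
Your proof is correct, but it takes a genuinely different route from the paper. The paper proves the matrix identity by brute force: it writes out $(\U\A\V)_{i,j}$ as a double sum and runs a three-case analysis ($i\leq u$; $i>u,\,j\leq u$; $i>u,\,j>u$), in each case invoking the defining recursion of the $\theta$'s to collapse the inner sum to $0$ or $1$; it then gets the bijection as an afterthought, arguing that equal $\theta$-tuples force $\U\A\V=\U\A'\V$, hence $\A=\A'$ by invertibility of $\U,\V$, and concluding by injectivity between two sets of equal cardinality. You instead encode everything in formal power series: the recursion becomes $\tilde A(z)\Theta(z)\equiv 1 \pmod{z^{\,n-u}}$, which gives the bijection immediately and more cleanly (you construct the inverse map explicitly, rather than relying on a counting argument), and the matrix identity becomes a single bivariate generating-function computation, with the key identity $\sum_{p,m}a_{p+m}x^py^m=\bigl(xA(x)-yA(y)\bigr)/(x-y)$ exploiting the Hankel structure, followed by one coefficient extraction from $G(x,y)=\bigl(\Theta(y)x^{u+1}-\Theta(x)y^{u+1}\bigr)/(x-y)$ that produces all three blocks ($\X$, the zero blocks, $-\Y$) and their signs at once. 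Your explicit control of the truncation error via the homogeneous-degree argument is sound and is in fact tighter bookkeeping than the paper's handling of out-of-range indices in its third case. Each approach has its merits: the paper's is elementary and requires no machinery beyond index manipulation, while yours replaces the case analysis by structural identities, scales better, and unifies the bijection with the matrix identity. One small imprecision: the triangle $\lbrace i+k\leq n-1\rbrace$ is not \emph{exactly} the index range of $\U\A\V$ but strictly contains the rectangle $0\leq i\leq r$, $0\leq k\leq n-r-1$; the containment is all you need, so this is only a matter of wording.
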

\begin{proof}
First let us show that
\[
\U\A\V = \begin{pmatrix}
\X & \0 \\
\0 & -\Y
\end{pmatrix},
\]
First, we know that $\A_{i,k} = a_{i+k}$ for $0\leq i\leq r$ and $0\leq k\leq n-r-1$. For the matrix $\V$, $\V_{j,k} = 0$ if $j<k$ and $\V_{j,k} = \theta_{k-j}$ if $k\leq j$. And for the matrix $\U$, $\U_{i,k} = 0$ if $k>i$ and $\U_{i,k} = \theta_{i-k}$ if $k\leq i$. 
Thus 
\[
(\U\A\V)_{i,j} = \sum_{k=0}^i\theta_{i-k}\left[\sum_{l=0}^j a_{k+l}\theta_{j-l}\right], \quad 0\leq i\leq r, 0\leq j\leq n-r-1.
\]

We are now going to look at three different cases:
\begin{itemize}
\item  Suppose that $i\leq u$, Since $a_0 = a_1 = \cdots =a_{u-1} = 0$, then
\[
(\U\A\V)_{i,j} = \sum_{k=0}^i\theta_{i-k}\left[\sum_{l=u-k}^j a_{k+l}\theta_{j-l}\right], \quad 0\leq i\leq r, 0\leq j\leq n-r-1.
\]
After a change of variable
\[
(\U\A\V)_{i,j} = \sum_{k=0}^i\theta_{i-k}\left[\sum_{l=0}^{j-u+k} a_{u+l}\theta_{j-u+k-l}\right], \quad 0\leq i\leq r, 0\leq j\leq n-r-1.
\]
By the recurrence relation on the $\theta_i$'s, we know that
\[
\sum_{l=0}^{j-u+k} a_{u+l}\theta_{j-u+k-l} =
\begin{cases}
 1, & \text{ if }  j-u+k = 0\\
0, & \text{ otherwise}
\end{cases}
\]
And thus
\[
(\U\A\V)_{i,j} =
\begin{cases}
\theta_{i+j-u} , & \text{ if } 0\leq u-j\leq i \\
0, & \text{ otherwise}
\end{cases}
\]
\item Now, suppose that $i>u$ and $j\leq u$. Since, $j\leq u$, then we can use the expression
\[
(\U\A\V)_{i,j} = \sum_{l=0}^j \theta_{j-l}\left[\sum_{k=0}^i a_{k+l}\theta_{i-k}\right], \quad 0\leq i\leq r, 0\leq j\leq n-r-1.
\]
We use the same transformation as before to get
\[
(\U\A\V)_{i,j} = \sum_{l=0}^j\theta_{j-l}\left[\sum_{k=0}^{i-u+l} a_{u+k}\theta_{i-u+l-k}\right], \quad 0\leq i\leq r, 0\leq j\leq n-r-1.
\]
and
\[
\sum_{k=0}^{i-u+l} a_{u+k}\theta_{i-u+l-k} =
\begin{cases}
 1, & \text{ if }  i-u+l = 0\\
0, & \text{ otherwise}
\end{cases}
\]
Since $u < i$, then the first case is never possible, therefore $(\U\A\V)_{i,j}$ is always zero.
\item Finally, suppose that If $i>u$ and $j> u$. We have 
\begin{align*}
(\U\A\V)_{i,j} & = \sum_{l=0}^j \theta_{j-l}\left[\sum_{k=0}^i a_{k+l}\theta_{i-k}\right] \\
& = (\U\A\V)_{i,u} + \sum_{l=u+1}^j \theta_{j-l}\left[\sum_{k=0}^i  a_{k+l}\theta_{i-k}\right] \\
& = \sum_{l=u+1}^j \theta_{j-l}\left[\sum_{k=u-l}^i  a_{k+l}\theta_{i-k} -\sum_{k=u-l}^{-1}  a_{k+l}\theta_{i-k} \right]
\end{align*}
The last equality contains the subtraction because by starting $k$ with $u-l$, we have some negative value for the index $k$, so we have to remove them. 
Finally, we have
\begin{align*}
(\U\A\V)_{i,j} &= -\sum_{l=u+1}^j \theta_{j-l}\sum_{k=u-l}^{-1}  a_{k+l}\theta_{i-k}\\
& = -\sum_{k=u-j}^{-1} \theta_{i-k} \sum_{l=u-k}^j a_{k+l}\theta_{j-l} \\
& = -\sum_{k=u-j}^{-1} \theta_{i-k} \sum_{l=0}^{j-u+k} a_{u+l}\theta_{j-u+k-l}
\end{align*}
By the recurrence relation on the $\theta_i$'s, we have
\[
(\U\A\V)_{i,j} = -\theta_{i+j-u}.
\]
\end{itemize}

For the bijection, suppose that $\A$ and $\A'$ both give the same $\theta_i$'s, then $\U\A\V = \U\A'\V$, but since $\U$ and $\V$ are invertible, then $\A = \A'$. We have an injection between two sets of the same size, therefore it is a bijection. 
\end{proof}

\begin{lem}\label{lem:3}
Suppose that $\A$ is the matrix corresponding to the sequence $\ai\in \mathcal{A}_u$, and it corresponds to the $\U,\V,\X,\Y$, then the last row of $\A$ is a linear combination of its other rows if and only if the last row of $\Y$ is a linear combination of its other rows.
\end{lem}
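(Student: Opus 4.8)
The plan is to exploit the invertibility of $\U$, $\V$, and $\X$ coming from the previous lemma, and to track how the property ``the last row lies in the linear span of the remaining rows'' behaves under the transformation $\A\mapsto\U\A\V$. Throughout I write $\A_i$ for the $i$-th row of $\A$ (indices $0\leq i\leq r$), and similarly for the other matrices. The strategy is to show that this property is invariant under right multiplication by $\V$, invariant under left multiplication by $\U$, and then to read it off directly from the block form $\U\A\V=\begin{pmatrix}\X & \0 \\ \0 & -\Y\end{pmatrix}$.

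First I would dispose of the multiplication by $\V$. Since $\V$ is invertible, a linear relation $\sum_i\l_i\A_i=\0$ among the rows holds if and only if the corresponding relation $\sum_i\l_i(\A\V)_i=\0$ holds among the rows of $\A\V$ (multiply on the right by $\V$ in one direction and by $\V^{-1}$ in the other). In particular the last row of $\A$ is a combination of the others exactly when the same is true for $\A\V$. Next comes the multiplication by $\U$, which is the delicate step. Because $\U$ is lower triangular with every diagonal entry equal to $\theta_0\neq 0$, its top-left $r\times r$ block is invertible, so the rows $(\U\A)_0,\dots,(\U\A)_{r-1}$ span exactly the same subspace as $\A_0,\dots,\A_{r-1}$. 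Moreover the last row satisfies
\[
(\U\A)_r=\theta_0\A_r+\sum_{k=0}^{r-1}\theta_{r-k}\A_k,
\]
a nonzero multiple of $\A_r$ plus a combination of the earlier rows. Using $\theta_0\neq 0$, I would then argue that $\A_r\in\mathrm{span}(\A_0,\dots,\A_{r-1})$ if and only if $(\U\A)_r$ lies in the span of the other rows of $\U\A$; here the point is that the added combination of earlier rows already lies in that span, so it does not affect membership. Thus the property survives left multiplication by $\U$ as well.

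Combining the two invariances reduces the claim to the matrix $\U\A\V=\begin{pmatrix}\X & \0 \\ \0 & -\Y\end{pmatrix}$. Its last row is $(\0,\,-\Y_{\text{last}})$, carrying zeros in the columns belonging to the $\X$-block. Since $\X$ is a square matrix with $\theta_0\neq 0$ along its anti-diagonal, it is invertible, hence its rows are linearly independent. Therefore in any expression of the last row as a combination of the other rows, the coefficients on the $u+1$ rows coming from the $\X$-block must all vanish (they contribute only to the $\X$-columns, where the last row is zero). What remains is precisely a relation expressing the last row of $-\Y$, equivalently of $\Y$, as a combination of the remaining rows of $\Y$, and conversely such a relation lifts back. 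This yields the desired equivalence.

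The main obstacle is the left multiplication by $\U$: because $\U$ genuinely mixes the rows, one must verify simultaneously that the span of the first $r$ rows is unchanged \emph{and} that the last row acquires a nonzero multiple of itself, and it is exactly this combination that lets the property pass through. Once that is in place, the block-decoupling step is routine, resting only on the invertibility of the square matrix $\X$.
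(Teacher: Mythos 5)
Your proof is correct and takes essentially the same route as the paper: both arguments rest on the decomposition $\U\A\V=\begin{pmatrix}\X&\0\\ \0&-\Y\end{pmatrix}$ from Lemma \ref{lem:2}, together with the invertibility and triangular structure of $\U$ and $\V$ (and of $\X$), to transport the condition ``the last row is a linear combination of the others'' between $\A$ and $\Y$. The only difference is presentational: the paper manipulates the coefficient vector, sending $(\l_0,\dots,\l_{r-1},1)$ to $(\l_0,\dots,\l_{r-1},1)\U^{-1}$ and noting its last entry stays nonzero, whereas you phrase the same fact in terms of row spans.
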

\begin{proof}
We know that 
\[
\A\V =  \U^{-1}\begin{pmatrix}
\X & \0 \\
\0 & -\Y
\end{pmatrix}
\]
Thus if $(\l_0,\cdots,\l_{r-1},1)\V = \0$, then 
\[
(\l_0,\cdots,\l_{r-1},1)\U^{-1}\begin{pmatrix}
\X & \0 \\
\0 & -\Y
\end{pmatrix} = \0
\]
Therefore, there is some non-zero $\mu_r$ with,
\[
(\mu_0,\cdots,\mu_{r-1},\mu_{r})\begin{pmatrix}
\X & \0 \\
\0 & -\Y
\end{pmatrix} = \0.
\]
Since $\mu_r\neq 0$, then the last row of $\Y$ is a linear combination of its previous row. The converse can be proven by going backward.
\end{proof}

\begin{defn}\label{defn:10}
Define $B(n,r,u)$ to be the set of non-zero sequences $\ai$ of length $n$ with linear complexity $r$ at most such that $u$ is the smallest index $i$ such that $a_i$ is non-zero. We also define $B(n,r)$ to be the set of all sequences $\ai$ of length $n$ with linear complexity at most $r$. Therefore $B(n,r) = \left(\cup_{u=0}^{n-1} B(n,r,u)\right)\cup \lbrace \0\rbrace$. We set $b(n,r,u) = \sharp B(n,r,u)$ and $b(n,r) = \sharp B(n,r)$.
\end{defn}

Now, suppose that $r+1\leq n-r$. If $u<r$, then we use the above method for reduction. Otherwise if $u\geq r$, then the first $r$ elements of $\ai$ are $0$ an therefore we can only get the zero sequence. 

Next, if $n-r\leq r$, then again we use the above reduction method for $u<n-r-1$. If $r>u\geq n-r-1$, then for any choice of the remaining coefficients $a_{u+1},\cdots, a_{n-1}$, it is always possible to generate it using a LFSR of length $r$. If $u\geq r$, then there is no LFSR of order at most $r$ which can generate the sequence.

These, together with Lemmas \ref{lem:1} and \ref{lem:3} allow us to get the next theorem.

\pagebreak
\begin{thm}\label{thm:6}\ 
\begin{enumerate}[(i)]
\item\label{thm:6-i} If $r+1\leq n-r$ and $u\geq r$, then $b(n,r,u) = 0$.
\item\label{thm:6-ii} If $n-r\leq r$ and $r>u\geq n-r-1$, then $b(n,r,u) = q^{n-u-1}(q-1)$.
\item\label{thm:6-iii}If $n-r\leq r$ and $u\geq r$, then $b(n,r,u) = 0$.
\item\label{thm:6-iv} If $r+1\leq n-r$ and $u<r$, or $n-r\leq r$ and $u<n-r-1$ then $b(n,r,u) = 
q^{u+1}(q-1)b(n-2u-2,r-u-1)$.
\end{enumerate}
\end{thm}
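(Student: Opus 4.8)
The plan is to handle the four cases of Theorem~\ref{thm:6} by splitting them according to whether the ``short'' reduction regime ($r+1\leq n-r$) or the ``long'' regime ($n-r\leq r$) applies, and within each regime whether $u$ is small enough for Daykin's reduction (Lemma~\ref{lem:2}) to be available. Cases \eqref{thm:6-i} and \eqref{thm:6-iii} are the degenerate ones and I would dispose of them first: if $u\geq r$, then $a_0=\cdots=a_{r-1}=0$, so the first $r$ rows of the Hankel-type matrix $\A$ vanish identically. In the regime $r+1\leq n-r$ the matrix $\A$ has $r+1$ rows, and with the top $r$ rows zero the only way the last row is a dependent combination of the others is for the last row to vanish as well, forcing $\ai=\0$; but $B(n,r,u)$ consists of nonzero sequences, so $b(n,r,u)=0$. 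The same vanishing argument gives \eqref{thm:6-iii} in the long regime. These require only a direct inspection of the shape of $\A$ and the definition of $\mathcal{A}_u$, so I expect no difficulty here.

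For case \eqref{thm:6-ii} the point is that $u$ lies in the window $n-r-1\leq u<r$, which is nonempty precisely when $n-r\leq r$. Here I would argue that the constraint ``last row of $\A$ is a linear combination of the others'' is automatically satisfiable: because $u\geq n-r-1$, the matrix $\A$ has at most $r$ genuinely nonzero rows (the leading rows being forced to zero by $a_0=\cdots=a_{u-1}=0$ together with the index range), so an LFSR of length $r$ always exists regardless of the values of the free coordinates $a_u,\dots,a_{n-1}$. Counting the sequences in $\mathcal{A}_u$ then amounts to choosing $a_u\in\F^*$ and $a_{u+1},\dots,a_{n-1}$ freely in $\F$, giving $(q-1)q^{n-u-1}$ sequences, which is exactly the claimed value. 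The care needed is to verify that the row-dependence condition is indeed vacuous in this range rather than imposing any relation; I would make this precise by a dimension count on the rows of $\A$ under the zero pattern.

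Case \eqref{thm:6-iv} is the recursive heart of the theorem and the main obstacle. Here $u$ is small enough that Daykin's reduction applies, so I would invoke Lemma~\ref{lem:2} to obtain the bijection between $\mathcal{A}_u$ and the parameter tuples $(\theta_0,\dots,\theta_{n-u-1})$ with $\theta_0\in\F^*$, realized through $\U\A\V=\left(\begin{smallmatrix}\X & \0 \\ \0 & -\Y\end{smallmatrix}\right)$. Then by Lemma~\ref{lem:3} the row-dependence condition on $\A$ is equivalent to the same condition on $\Y$. The crucial observation is that $\Y$ is itself a persymmetric matrix of exactly the form studied, but built from the shifted sequence $(\theta_{u+2},\dots,\theta_{n-u-1})$ with reduced parameters: it corresponds to sequences of length $n-2u-2$ with linear complexity at most $r-u-1$. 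Thus the number of valid $\Y$ (i.e.\ those whose last row is dependent) is $b(n-2u-2,r-u-1)$. Finally I would account for the $\theta_i$ that do \emph{not} appear in $\Y$, namely $\theta_0,\dots,\theta_{u+1}$: these are free except that $\theta_0\neq 0$, contributing a factor $q^{u+1}(q-1)$, and multiplying gives the claimed $b(n,r,u)=q^{u+1}(q-1)\,b(n-2u-2,r-u-1)$. The delicate point, and where I would spend the most effort, is matching the index bookkeeping so that $\Y$'s defining sequence really has length $n-2u-2$ and the derived linear-complexity bound is exactly $r-u-1$; one must read off the dimensions of $\Y$ from its displayed entries $\theta_{u+2},\dots,\theta_{n-u-1}$ and confirm that the free $\theta$'s are counted once and only once.
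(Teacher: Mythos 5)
Your case (iv) --- the recursive heart --- is correct and is exactly the paper's proof: Lemma \ref{lem:2} gives the bijection onto the $\theta$-tuples, Lemma \ref{lem:3} transfers the dependence condition to $\Y$, and $\Y$ is indeed the persymmetric matrix of the length-$(n-2u-2)$ sequence $(\theta_{u+2},\dots,\theta_{n-u-1})$ with complexity bound $r-u-1$ (its dimensions $(r-u)\times(n-r-u-1)$ match), while the unused parameters $\theta_0\in\F^*,\theta_1,\dots,\theta_{u+1}$ contribute the factor $q^{u+1}(q-1)$.

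The gap is in cases (i)--(iii), where you misread the matrix $\A$. Row $i$ of $\A$ is $(a_i,a_{i+1},\dots,a_{i+n-r-1})$, so $u\geq r$ does \emph{not} make the first $r$ rows vanish identically: for $n=5$, $r=2$, $u=2$ the rows are $(0,0,a_2)$, $(0,a_2,a_3)$, $(a_2,a_3,a_4)$ with $a_2\neq 0$, and none of them is zero. Your argument for (i) and (iii) therefore rests on a false premise, and its intended conclusion (``last row must vanish, forcing $\ai=\0$'') never materializes --- here the last row is nonzero and the dependence simply fails. The correct argument, which is what the paper means by the ``triangular shape'', is positional: the last row (row $r$) has its leftmost nonzero entry in column $u-r$, while every earlier row $i<r$ satisfies $\A_{i,u-r}=a_{i+u-r}=0$ because $i+u-r<u$; a row with a nonzero entry in a column where all other rows vanish cannot be a linear combination of them, so $b(n,r,u)=0$.

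In case (ii), the inference ``at most $r$ genuinely nonzero rows, hence an LFSR of length $r$ always exists'' is a non sequitur: bounding the number of nonzero rows (equivalently, the rank) does not place the \emph{last} row in the span of the \emph{previous} ones --- if all previous rows were zero and the last row nonzero, the rank would be $1$ yet the dependence would fail, which is precisely what happens in (i)/(iii). (The count is also off at the boundary: for $u=n-r-1$ no row of $\A$ is forced to vanish, so all $r+1$ rows can be nonzero.) What actually makes (ii) work is a spanning argument: since $n-r-1\leq u\leq r-1$, the rows $u-(n-r-1),\dots,u$ of $\A$ are $(0,\dots,0,a_u)$, $(0,\dots,0,a_u,a_{u+1})$, \dots, $(a_u,\dots,a_{u+n-r-1})$; they form an anti-triangular $(n-r)\times(n-r)$ matrix with $a_u\neq 0$ along the anti-diagonal, hence invertible, and they all lie strictly above the last row because $u<r$. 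Consequently they span $\F^{n-r}$, so the last row is automatically a combination of previous rows for \emph{every} choice of $a_{u+1},\dots,a_{n-1}$, giving $(q-1)q^{n-u-1}$. This invertible-submatrix step, which the paper sketches as ``some first non-zero rows of $\A$ make an invertible matrix'', is the missing idea that your proposed dimension count would have to supply.
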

\begin{proof}
Lemma \ref{lem:1} tells us all elements of $B(n,r)$ can be generated by a LFSR of order $r$. Hence, we study only matrices in the form of $\A$. For \eqref{thm:6-i}, one can just write down the matrix $\A$ and see that it has a triangular shape where the last row is never a linear combination of the previous row. For \eqref{thm:6-ii}, we again look at the form of the matrix $\A$, we will see that some first non-zero rows of $\A$ make an invertible matrix and thus the last row is always a linear combination of the rows of that invertible matrix whatever the choice of the coefficients we choose after $a_u$. For \eqref{thm:6-iii}, looking at the form of the matrix will also give the result. For \eqref{thm:6-iv}, we use the bijection in Lemma \ref{lem:2} and Lemma \ref{lem:3}.
\end{proof}

Summing all the possibilities in Theorem \ref{thm:6}, we get the following corollaries.

\begin{cor}\label{cor:1}
Given two integers $r\leq n$, the number of finite sequence of length $n$ with linear complexity at most $r$ is equal to 
\begin{enumerate}[(i)]
\item If $r=0$, $b(n,0) = 1$.
\item If $r+1\leq n-r$,
\[
b(n,r) = 1+\sum_{u=0}^{r-1} q^{u+1}(q-1)b(n-2u-2,r-u-1).
\]
\item If $n-r\leq r$,
\[
b(n,r) = 1+\sum_{u=0}^{n-r-2} q^{u+1}(q-1)b(n-2u-2,r-u-1)+\sum_{u=n-r-1}^{r-1}(q-1)q^{n-u-1}.
\]
\end{enumerate}
\end{cor}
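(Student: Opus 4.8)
The plan is to reduce everything to the additive decomposition recorded in Definition~\ref{defn:10} and then read off the value of each summand directly from Theorem~\ref{thm:6}. The first observation I would make is that the union $B(n,r)=\left(\cup_{u=0}^{n-1}B(n,r,u)\right)\cup\lbrace\0\rbrace$ is in fact a \emph{disjoint} union: every non-zero sequence has a unique smallest index $u$ at which it is non-zero, so it lies in exactly one set $B(n,r,u)$, and none of these sets contains $\0$. Consequently
\[
b(n,r)=1+\sum_{u=0}^{n-1}b(n,r,u),
\]
and the whole task collapses to partitioning the index range $0\leq u\leq n-1$ according to which clause of Theorem~\ref{thm:6} governs $b(n,r,u)$. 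The base case $r=0$ is immediate: by Definition~\ref{defn:6} the only sequence of linear complexity at most $0$ is $\0$, so $b(n,0)=1$, which is item (i).

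For item (ii), I would assume $r+1\leq n-r$. Here clause~\eqref{thm:6-iv} of Theorem~\ref{thm:6} applies exactly when $u<r$, while clause~\eqref{thm:6-i} applies exactly when $u\geq r$, in which case the summand vanishes. The two ranges $\lbrace 0,\dots,r-1\rbrace$ and $\lbrace r,\dots,n-1\rbrace$ partition $\lbrace 0,\dots,n-1\rbrace$, so only the first range contributes, and substituting the value from \eqref{thm:6-iv} yields the claimed formula.

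For item (iii), I would assume $n-r\leq r$. Now three clauses are in play: \eqref{thm:6-iv} for $u<n-r-1$, \eqref{thm:6-ii} for $n-r-1\leq u\leq r-1$, and \eqref{thm:6-iii} for $u\geq r$ (the last of which contributes nothing). After checking that these three ranges tile $\lbrace 0,\dots,n-1\rbrace$ without overlap, I would insert the respective values $q^{u+1}(q-1)b(n-2u-2,r-u-1)$ and $q^{n-u-1}(q-1)$ into the sum to obtain the stated expression.

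The computation itself is entirely mechanical once the decomposition is in place; the one point requiring genuine care — and the step I would double-check most carefully — is the bookkeeping at the boundaries of the $u$-ranges, namely verifying that the inequalities $u<r$, $n-r-1\leq u<r$, and $u\geq r$ appearing in the various clauses of Theorem~\ref{thm:6} really do cover each $u\in\lbrace 0,\dots,n-1\rbrace$ exactly once under each of the two hypotheses $r+1\leq n-r$ and $n-r\leq r$.
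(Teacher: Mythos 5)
Your proposal is correct and is exactly the paper's argument: the paper proves Corollary~\ref{cor:1} simply by ``summing all the possibilities in Theorem~\ref{thm:6}'', i.e.\ by the disjoint decomposition $b(n,r)=1+\sum_{u=0}^{n-1}b(n,r,u)$ and the case-by-case substitution you carry out. Your write-up merely makes explicit the disjointness of the union in Definition~\ref{defn:10} and the tiling of the $u$-ranges, which the paper leaves implicit.
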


\begin{cor}\label{cor:2}
Given two integers $r\leq n$, the number of finite sequences of length $n$ with linear complexity at most $r$ is equal to $1$ if $r=0$ and if $0<r\leq n$,
\[
b(n,r) = 1-q+ q^2 b(n-2,r-1).
\]
\end{cor}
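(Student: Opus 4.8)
The plan is to deduce the single recursion directly from Theorem~\ref{thm:6}, rather than from the case-split of Corollary~\ref{cor:1}, by first isolating a clean ``shift'' identity at the level of the refined counts $b(n,r,u)$. Recall from Definition~\ref{defn:10} that $b(n,r) = 1 + \sum_{u=0}^{n-1} b(n,r,u)$, the leading $1$ accounting for the zero sequence. I would prove the recursion by establishing two facts: a base identity $b(n,r,0) = q(q-1)\,b(n-2,r-1)$ for the index $u=0$, and a shift identity
\[
b(n,r,u) = q\,b(n-2,r-1,u-1), \qquad 1\le u\le n-1 ,
\]
relating sequences whose first nonzero entry sits at position $u$ to the same problem for length $n-2$ and complexity $r-1$.

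The shift identity is the heart of the argument, and I would verify it by matching the explicit formulas of Theorem~\ref{thm:6} across its two nontrivial regimes. In the reduction regime (item~\eqref{thm:6-iv}) one has $b(n,r,u) = q^{u+1}(q-1)\,b(n-2u-2,r-u-1)$, while applying the same item to $(n-2,r-1,u-1)$ gives $b(n-2,r-1,u-1) = q^{u}(q-1)\,b(n-2u-2,r-u-1)$; the two differ by exactly the factor $q$. In the saturated regime (item~\eqref{thm:6-ii}) one has $b(n,r,u) = q^{n-u-1}(q-1)$ and $b(n-2,r-1,u-1) = q^{n-u-2}(q-1)$, again a ratio of $q$. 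The key bookkeeping step is to check that the substitution $(n,r,u)\mapsto(n-2,r-1,u-1)$ keeps us in the \emph{same} regime: the inequalities defining items~\eqref{thm:6-i}--\eqref{thm:6-iv} are all preserved under $n\mapsto n-2$, $r\mapsto r-1$, $u\mapsto u-1$, so in particular the vanishing cases (items~\eqref{thm:6-i} and~\eqref{thm:6-iii}) map to vanishing cases and the identity reads $0=q\cdot 0$ there.

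With both identities in hand the assembly is a short computation. Writing out
\[
b(n,r) = 1 + b(n,r,0) + \sum_{u=1}^{n-1} b(n,r,u) = 1 + q(q-1)\,b(n-2,r-1) + q\sum_{v=0}^{n-2} b(n-2,r-1,v),
\]
where $v=u-1$, I would note that the term $v=n-2$ is out of range for a length $n-2$ problem and so contributes $0$, whence $\sum_{v=0}^{n-2} b(n-2,r-1,v) = b(n-2,r-1)-1$. Substituting gives $b(n,r) = 1 + q(q-1)b(n-2,r-1) + q\bigl(b(n-2,r-1)-1\bigr) = 1 - q + q^2 b(n-2,r-1)$, as claimed.

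The main obstacle is precisely the boundary behaviour hidden in the phrase ``the out-of-range term contributes $0$''. For $u=n-1$ the partner $b(n-2,r-1,n-2)$ is always zero, but $b(n,r,n-1)$ --- which counts the sequences $(0,\dots,0,a)$ with $a\in\F^*$ --- is zero only when $r\le n-1$, since by Proposition~\ref{pro:1} these sequences have linear complexity exactly $n$. Thus the shift identity, and with it the recursion, holds cleanly for $0<r\le n-1$, while at $r=n$ it picks up the surplus $q-1$ (indeed $b(n,n)=q^n$ whereas $1-q+q^2b(n-2,n-1)=1-q+q^n$). I would therefore restrict to $0<r\le n-1$ and treat $r=n$ as a separate, trivial case; the same care is needed to confirm that the base identity for $u=0$ survives the edge case $r=n-1$, where $u=0$ actually falls under item~\eqref{thm:6-ii} yet still evaluates to $q(q-1)b(n-2,r-1)$ because $b(n-2,n-2)=q^{n-2}$.
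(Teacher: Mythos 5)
Your proof is correct wherever the statement itself is correct, and it reorganizes the argument around a different key identity than the paper uses. The paper never isolates a term-wise relation: it starts from the already-summed formulas of Corollary \ref{cor:1} and runs two separate computations, one for each regime ($r+1\le n-r$ and $n-r\le r$), in each case splitting off the $u=0$ term, reindexing the remaining sum, and recognizing it as $b(n-2,r-1)$ via Corollary \ref{cor:1} again; in the second regime this additionally requires cancelling two geometric sums against each other. Your shift identity $b(n,r,u)=q\,b(n-2,r-1,u-1)$ is the term-wise shadow of that reindexing, pushed down to the refined counts of Theorem \ref{thm:6}, and your observation that the substitution $(n,r,u)\mapsto(n-2,r-1,u-1)$ preserves all four regimes of Theorem \ref{thm:6} lets you handle both of the paper's cases in one stroke. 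What the paper's route buys is economy of prerequisites (nothing beyond the displayed formulas of Corollary \ref{cor:1}); what yours buys is uniformity and, crucially, visibility of the boundary terms.

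That visibility matters, because your remark about $r=n$ exposes an actual error in the paper: the corollary as stated is false at $r=n$. Indeed $b(n,n)=q^n$, while $1-q+q^2b(n-2,n-1)=1-q+q^n$; the discrepancy $q-1$ is exactly the number of sequences $(0,\dots,0,a)$, $a\in\F^*$, which by Proposition \ref{pro:1} have linear complexity exactly $n$ and hence are counted in $b(n,n,n-1)$ on the left with no partner $b(n-2,n-1,n-2)$ on the right. The paper's proof overlooks this because, in the regime $n-r\le r$, it extracts a $u=0$ term from the sum $\sum_{u=0}^{n-r-2}q^{u+1}(q-1)b(n-2u-2,r-u-1)$, which is empty as soon as $r\ge n-1$; its manipulations are therefore only justified for $r\le n-2$, and the cases $r=n-1$ (where the formula still holds, as you verify via $b(n-2,n-2)=q^{n-2}$) and $r=n$ (where it fails) go unchecked. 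The error is harmless downstream --- the proof of Theorem \ref{thm:7} only ever invokes the recursion at parameters $(n',r')$ with $r'\le n'-1$ --- but your restriction to $0<r\le n-1$ is the correct form of the statement, and your proof of it is complete.
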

\begin{proof}
Suppose that $r+1\leq n-r$. Then,
\begin{align*}
b(n,r) &= 1+\sum_{u=0}^{r-1} q^{u+1}(q-1)b(n-2u-2,r-u-1)\\
&= 1 + q(q-1) b(n-2,r-1) + \sum_{u=1}^{r-1} q^{u+1}(q-1)b(n-2u-2,r-u-1)\\
&= 1 + q(q-1) b(n-2,r-1) + q \sum_{u=0}^{r-2} q^{u+1}(q-1)b(n-2u-4,r-u-2)\\
&= 1 + q(q-1) b(n-2,r-1) + q b(n-2,r-1) -q\\
&= 1-q+ q^2 b(n-2,r-1)
\end{align*}

Now suppose that $n-r\leq r$. Then,

\begin{align*}
b(n,r) &= 1 +\sum_{u=n-r-1}^{r-1}(q-1)q^{n-u-1} \\
&\quad + \sum_{u=0}^{n-r-2} q^{u+1}(q-1)b(n-2u-2,r-u-1)\\
&= 1 + \sum_{u=n-r-1}^{r-1}(q-1)q^{n-u-1} + q(q-1)b(n-2,r-1) \\
& \quad + q\sum_{u=1}^{n-r-2} q^{u}(q-1)b(n-2u-2,r-u-1)\\
&= 1 + \sum_{u=n-r-1}^{r-1}(q-1)q^{n-u-1} + q(q-1)b(n-2,r-1) \\
& \quad + q\sum_{u=0}^{n-r-3} q^{u+1}(q-1)b(n-2u-4,r-u-2)\\
&= 1 + \sum_{u=n-r-1}^{r-1}(q-1)q^{n-u-1} + q(q-1)b(n-2,r-1)\\
& \quad + q\left[ b(n-2,r-1)-1-\sum_{u=n-r-2}^{r-2}(q-1)q^{n-u-3} \right]\\
&= 1-q +q^2b(n-2,r-1)+\sum_{u=n-r-1}^{r-1}(q-1)q^{n-u-1} \\
& \quad -q\sum_{u=n-r-2}^{r-2}(q-1)q^{n-u-3}\\
&=  1-q +q^2b(n-2,r-1) \\
&\quad + (q-1)\left[ \sum_{u=n-r-1}^{r-1}q^{n-u-1} -\sum_{u=n-r-2}^{r-2}q^{n-u-2}  \right]\\
&= 1-q +q^2b(n-2,r-1)  \\
&\quad +(q-1) \left[ \sum_{u=n-r-1}^{r-1}q^{n-u-1} -\sum_{u=n-r-1}^{r-1}q^{n-u-1}  \right]\\ 
&= 1-q +q^2b(n-2,r-1)
\end{align*}
\end{proof}

As a consequence of the corollaries, we get the following theorem.

\begin{thm}\label{thm:7}
Given two integers $r\leq n$, the number of finite sequences of length $n$ with linear complexity at most $r$ is
\begin{enumerate}[(i)]
\item If $r=0$, $b(n,0) = 1$.
\item If $r+1\leq n-r$,
\[
b(n,r) = \frac{q^{2r+1}+1}{q+1}
\]
\item If $n-r\leq r$,
\[
b(n,r) = \frac{1-q^{2(n-r)}}{1+q}+q^n.
\]
\end{enumerate}
\end{thm}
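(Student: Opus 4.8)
The plan is to run the entire argument off the recurrence established in Corollary \ref{cor:2}, namely $b(n,r) = 1 - q + q^2\, b(n-2,r-1)$, valid for $0 < r \le n$. The first observation I would record is that this recurrence preserves the quantity $n - 2r$, since $(n-2) - 2(r-1) = n-2r$. Because the two regimes of the statement are exactly $n - 2r \ge 1$ (the condition $r+1 \le n-r$ of part (ii)) and $n - 2r \le 0$ (the condition $n - r \le r$ of part (iii)), every application of the recurrence keeps us inside the same regime, so the correct target formula never changes along a reduction chain. Part (i) is then immediate: $b(n,0)=1$ directly from Definition \ref{defn:10}, and it is also the $r=0$ specialization of the part (ii) formula, $\frac{q^{1}+1}{q+1}=1$.

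For part (ii) I would induct on $r$, with base case $r=0$ (which is part (i)). In the inductive step I assume $n - 2r \ge 1$ and $r \ge 1$; then the reduced pair $(n-2,r-1)$ still satisfies $(n-2)-2(r-1)\ge 1$ and $r-1 \le n-2$, so it is admissible for both the recurrence and the induction hypothesis $b(n-2,r-1)=\frac{q^{2r-1}+1}{q+1}$. Substituting into the recurrence and clearing the denominator $q+1$ reduces the claim to the identity $(1-q)(q+1)+q^{2r+1}+q^2 = 1+q^{2r+1}$, which is immediate. This gives $b(n,r)=\frac{q^{2r+1}+1}{q+1}$; it is worth noting in passing that in this regime the count is independent of $n$.

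For part (iii) the recurrence cannot be iterated down to $r=0$ without driving the length non-positive, so the natural base case is the corner $n=r$. There I would invoke Proposition \ref{pro:1}: every length-$n$ sequence has linear complexity at most $n$, hence all $q^n$ sequences are counted and $b(n,n)=q^n$, which matches $\frac{1-q^{2(n-n)}}{1+q}+q^n=q^n$. I then induct on $n-r \ge 0$. When $n>r$, the reduced pair $(n-2,r-1)$ stays in the regime $n-2r \le 0$, satisfies $r-1 \le n-2$, and has $(n-2)-(r-1)=n-r-1$, so the hypothesis $b(n-2,r-1)=\frac{1-q^{2(n-r-1)}}{1+q}+q^{n-2}$ applies; plugging it into the recurrence and simplifying reduces the claim to $(1-q)(1+q)+q^2-q^{2(n-r)}=1-q^{2(n-r)}$, again immediate.

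The only genuinely delicate point is the bookkeeping at the regime boundaries: one must check that each reduction $(n,r)\mapsto(n-2,r-1)$ lands in a pair for which Corollary \ref{cor:2} is applicable (that is, $r-1 \le n-2$, stopping the chain at $r=0$ in case (ii) and at $n=r$ in case (iii)) and for which the induction hypothesis uses the correct one of the two formulas; securing the base value $b(n,n)=q^n$ from Proposition \ref{pro:1} is the other ingredient. Everything else is the routine algebra indicated above. As a cleaner alternative that sidesteps the case split at the level of the recursion, one can fix $c=n-2r$ and solve the first-order linear recurrence $\beta_r = 1-q+q^2\beta_{r-1}$ for $\beta_r := b(2r+c,r)$ in closed form, obtaining $\beta_r = \frac{1}{1+q}+C\,q^{2r}$, and then determine the constant $C$ from the initial value $\beta_0=1$ when $c\ge 1$ and from $\beta_{-c}=q^{-c}$ when $c\le 0$; re-expressing in terms of $n$ and $r$ recovers both closed forms.
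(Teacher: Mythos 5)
Your proposal is correct and takes essentially the same approach as the paper: both arguments iterate the recurrence of Corollary \ref{cor:2} within the regime preserved by $n-2r$, terminating at the same base cases $b(m,0)=1$ and $b(m,m)=q^m$ (the latter from Proposition \ref{pro:1}). The only difference is cosmetic --- the paper unrolls the recurrence as a telescoping sum to derive the closed forms, while you verify them by induction.
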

\begin{proof}
Suppose that $r+1\leq n-r$. From the previous corollary,
\[
\sum_{i=0}^{r-1} q^{2i} b(n-2i,r-i) = (1-q)\sum_{i=0}^{r-1} q^{2i} + \sum_{i=0}^{r-1} q^{2(i+1)}b(n-2(i+1),r-(i+1)).
\]
Thus
\[
\sum_{i=0}^{r-1} q^{2i} b(n-2i,r-i) = (1-q)\sum_{i=0}^{r-1} q^{2i} +\sum_{i=1}^{r}q^{2i} b(n-2i,r-i)
\]
Therefore,
\[
b(n,r) = (1-q)\sum_{i=0}^{r-1} q^{2i}+q^{2r} = q^{2r}+\frac{1-q^{2r}}{1+q}
\]
And we get the result.

If $n-r\leq r$, then $r\geq \frac{n}{2}$. So using this,

\[
\sum_{i=0}^{n-r-1} q^{2i} b(n-2i,r-i) = (1-q)\sum_{i=0}^{n-r-1} q^{2i} + \sum_{i=0}^{n-r-1} q^{2(i+1)}b(n-2(i+1),r-(i+1))
\]
Therefore
\[
b(n,r) =(1-q)\sum_{i=0}^{n-r-1} q^{2i} + q^{2\left(n-r\right)}b(2r-n,2r-n),
\]
Since $B(2r-n,2r-n) = \F^{2r-n}$, then
\[
b(n,r) =(1-q)\frac{1-q^{2(n-r)}}{1-q^2} + q^n
\]
And we get our result.

\end{proof}

Using the previous theorem, we can compute the number of finite sequence with a fixed linear complexity.

\begin{thm}\label{thm:8}
Let $r\leq n$ be positive integers. Then, the number of sequences of length $n$ and linear complexity $r$ over a finite field $\F$ of size $q$ is 
\[
\begin{cases}
1\quad &\text{if } r=0,\\
q^{2r-1}(q-1) \quad &\text{if } r\leq \lfloor\frac{n}{2}\rfloor ,\\
q^{2(n-r)}(q-1)\quad &\text{if } r>  \lfloor\frac{n}{2}\rfloor.
\end{cases}
\]
\end{thm}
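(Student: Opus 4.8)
The plan is to reduce everything to Theorem \ref{thm:7} by a single subtraction. By Lemma \ref{lem:1} every sequence counted by $b(n,r-1)$ is also counted by $b(n,r)$, so $B(n,r-1)\subseteq B(n,r)$, and the sequences of linear complexity \emph{exactly} $r$ are precisely those in $B(n,r)\setminus B(n,r-1)$. Hence for $r\geq 1$ the quantity to be computed is $b(n,r)-b(n,r-1)$, while for $r=0$ only the zero sequence qualifies, giving $1$. This settles the case $r=0$ and also confirms the claimed value $q^{2r-1}(q-1)$ at $r=1$ will come out of the same subtraction, since $b(n,0)=1$.

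First I would record which of the two nontrivial branches of Theorem \ref{thm:7} governs each term. The condition $r+1\leq n-r$ is equivalent to $r\leq\lceil n/2\rceil-1$, and $n-r\leq r$ to $r\geq\lceil n/2\rceil$; these partition the admissible range with threshold $\lceil n/2\rceil$. It is convenient to note that the first-branch formula $\frac{q^{2r+1}+1}{q+1}$ also returns the correct value $1$ at $r=0$, so it may be applied uniformly to $b(n,r-1)$ whenever $r\leq\lceil n/2\rceil$, including the edge $r=1$. Consequently $b(n,r)$ and $b(n,r-1)$ lie in the same branch except at the single value $r=\lceil n/2\rceil$, where $b(n,r)$ falls in the second branch but $b(n,r-1)$ still falls in the first.

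The bulk of the argument is then two routine cancellations using $q^2-1=(q-1)(q+1)$. When both terms lie in the first branch, i.e. $r\leq\lceil n/2\rceil-1$, one gets
\[
b(n,r)-b(n,r-1)=\frac{q^{2r+1}-q^{2r-1}}{q+1}=q^{2r-1}(q-1),
\]
and when both lie in the second branch, i.e. $r\geq\lceil n/2\rceil+1$, the $q^n$ terms cancel and
\[
b(n,r)-b(n,r-1)=\frac{q^{2(n-r+1)}-q^{2(n-r)}}{q+1}=q^{2(n-r)}(q-1).
\]
Comparing these regimes against the stated threshold $\lfloor n/2\rfloor$ accounts for every value of $r$ except the boundary $r=\lceil n/2\rceil$.

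The main obstacle is exactly this boundary term, where the two formulas must be reconciled by hand. At $r=\lceil n/2\rceil$ I would substitute the second-branch expression for $b(n,r)$ and the first-branch expression for $b(n,r-1)$, obtaining
\[
\frac{1-q^{2(n-r)}}{1+q}+q^n-\frac{q^{2r-1}+1}{q+1}=q^n-\frac{q^{2(n-r)}+q^{2r-1}}{1+q}.
\]
For even $n$ one has $r=n/2=\lfloor n/2\rfloor$, so $2(n-r)=n$ and $2r-1=n-1$, and the fraction collapses to $q^{n-1}$, leaving $q^n-q^{n-1}=q^{2r-1}(q-1)$, matching the first stated case. For odd $n$ one has $r=\lceil n/2\rceil=\lfloor n/2\rfloor+1$, so $2(n-r)=n-1$ and $2r-1=n$, and the same collapse yields $q^{n-1}(q-1)=q^{2(n-r)}(q-1)$, matching the second stated case. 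Thus the boundary value is absorbed correctly into whichever branch $\lfloor n/2\rfloor$ assigns it to, and combining the three regimes completes the proof. I expect the only genuine care to be the floor/ceiling bookkeeping and this single boundary identity; the rest is the geometric cancellation above.
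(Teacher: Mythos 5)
Your proposal is correct and follows essentially the same route as the paper: the exact-complexity count is obtained as the difference $b(n,r)-b(n,r-1)$ of the ball sizes from Theorem \ref{thm:7}, the two pure-branch regimes are handled by the geometric cancellation $q^2-1=(q-1)(q+1)$, the boundary $r=\lceil n/2\rceil$ (where the two terms fall in different branches) is computed by hand with an even/odd split, and the floor/ceiling bookkeeping finishes the argument. The only differences are cosmetic: you justify the subtraction explicitly via Lemma \ref{lem:1} and absorb the $r=1$ edge case into the first-branch formula, whereas the paper computes $r=1$ separately and leaves the generic regimes as ``easy to check.''
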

\begin{proof}
The case $r=0$ is clear. For $r=1$, we get that the number of sequences of length $n$ and linear complexity $r$ over a finite field $\F$ of size $q$ is
\[
\frac{q^3-q}{q+1} = q(q-1).
\]

Now, suppose that $r=\lceil\frac{n}{2}\rceil$. Then the number we want is given by
\[
q^n-\frac{q^{2(n-r)}+q^{2r-1}}{q+1} =\begin{cases}
q^{2r-1}(q-1) & \text{if } n \text{ is even}\\
q^{2(n-r)}(q-1) & \text{if } n \text{ is odd}
\end{cases}
\]
It is easy to check that if $r\leq \lceil\frac{n}{2}\rceil -1$, then the number is
\[
q^{2r-1}(q-1),
\]
and if $\lceil\frac{n}{2}\rceil +1$, the number is
\[
q^{2(n-r)}(q-1).
\]
Furthermore $\lbrace r\leq \lceil\frac{n}{2}\rceil -1\rbrace $ and $\lbrace r=\lceil\frac{n}{2}\rceil, n \text{ even} \rbrace$ are the same as $\lbrace r\leq \lfloor\frac{n}{2}\rfloor$. Finally $\lbrace r\geq \lceil\frac{n}{2}\rceil +1\rbrace $ and $\lbrace r=\lceil\frac{n}{2}\rceil, n \text{ odd} \rbrace$ are the same as $\lbrace r> \lfloor\frac{n}{2}\rfloor$.
\end{proof}

Since we also know the size of balls with respect to the linear complexity from Theorem \ref{thm:7}, we can give a formula for the Sphere packing bound.

\begin{thm}[Sphere packing bound]
Let $S$ be a set of sequences of length $n$ and with minimum distance $d$. Then
\[
\sharp S\leq 
\begin{cases}
\frac{q^n(q+1)}{q^{2\lfloor \frac{d-1}{2} \rfloor}+1} & \text{if } 2\lfloor \frac{d-1}{2} \rfloor\leq n-1, \\
\frac{q^n(q+1)}{1-q^{2(n-\lfloor \frac{d-1}{2} \rfloor)}+(1+q)q^n} & \text{if } 2\lfloor \frac{d-1}{2} \rfloor> n-1. \\
\end{cases}
\]
\end{thm}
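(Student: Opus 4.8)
The plan is to carry out the classical sphere-packing (Hamming) argument, the only difference being that the volume of a ball is now measured by the counting function $b(n,r)$ of Theorem~\ref{thm:7} rather than by a binomial sum.

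First I would exploit the translation invariance of the metric. Since $\d(\x,\y)=\L(\x-\y)$ by Definition~\ref{defn:7} depends only on the difference $\x-\y$, the ball of radius $t$ about any centre $\x$, namely $\{\y\in\F^n:\L(\y-\x)\le t\}$, is in bijection with $\{\e\in\F^n:\L(\e)\le t\}$ via $\y\mapsto\y-\x$. Hence every ball of radius $t$ has the same cardinality, and by Definition~\ref{defn:10} that common cardinality is exactly $b(n,t)$.

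Next, set $t=\lfloor\frac{d-1}{2}\rfloor$ and show that the $\sharp S$ balls of radius $t$ centred at the points of $S$ are pairwise disjoint. This is where the metric axioms enter: if a point $\mathbf{z}$ lay in the balls of two distinct $\x,\y\in S$, then the triangle inequality --- itself a consequence of the subadditivity $\L(\x-\y)\le\L(\x-\mathbf{z})+\L(\mathbf{z}-\y)$ supplied by Theorem~\ref{thm:4} --- would give $\d(\x,\y)\le 2t\le d-1<d$, contradicting the minimum distance of $S$. Since these disjoint balls all lie inside $\F^n$, comparing cardinalities yields $\sharp S\cdot b(n,t)\le q^n$, that is $\sharp S\le q^n/b(n,t)$.

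Finally I would insert the closed forms of $b(n,t)$ from Theorem~\ref{thm:7}, splitting on exactly the dichotomy appearing in the statement. When $2\lfloor\frac{d-1}{2}\rfloor\le n-1$ the hypothesis $t+1\le n-t$ of case~(ii) holds, which fixes $b(n,t)$; when $2\lfloor\frac{d-1}{2}\rfloor> n-1$ the hypothesis $n-t\le t$ of case~(iii) holds, giving $b(n,t)=\bigl(1-q^{2(n-t)}+(1+q)q^n\bigr)/(q+1)$. Substituting each expression into $\sharp S\le q^n/b(n,t)$ and clearing the common factor $q+1$ from numerator and denominator produces the two displayed bounds. I do not expect a genuine obstacle here: the argument is routine once translation invariance identifies each ball's volume with $b(n,t)$. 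The only points needing care are verifying that the case boundary $2\lfloor\frac{d-1}{2}\rfloor\le n-1$ coincides precisely with the hypothesis $r+1\le n-r$ of Theorem~\ref{thm:7} taken at $r=t$, and tracking the algebra so the denominators emerge in the stated form.
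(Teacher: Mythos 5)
Your proposal is correct and is precisely the argument the paper compresses into a single sentence: the balls of radius $t=\lfloor\frac{d-1}{2}\rfloor$ centred at the elements of $S$ are pairwise disjoint (triangle inequality plus $2t\le d-1$), each has cardinality $b(n,t)$ by translation invariance of $\d$, and the closed forms of Theorem~\ref{thm:7} are substituted into $\sharp S\cdot b(n,t)\le q^n$. One caveat on the algebra you deferred: in the first case Theorem~\ref{thm:7} gives $b(n,t)=\frac{q^{2t+1}+1}{q+1}$, so the substitution yields $\sharp S\le \frac{q^n(q+1)}{q^{2t+1}+1}$, whose denominator carries the exponent $2t+1$, not the exponent $2t$ displayed in the theorem. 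Your bound is therefore \emph{stronger} than the stated one, which then follows a fortiori from $q^{2t+1}+1\ge q^{2t}+1$; the discrepancy is almost certainly a typo in the paper's statement (in the second case your substitution matches the display exactly, and at $t=0$ only the exponent $2t+1$ reproduces the trivial bound $q^n$, which confirms the intended bound is $q^n/b(n,t)$). So your proof is sound and takes the paper's route, but the claim that clearing the factor $q+1$ literally ``produces the two displayed bounds'' fails for the first case; either flag the typo or insert the trivial weakening step.
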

\begin{proof}
This is a direct consequence of Theorem \ref{thm:7} and using the fact that the union of the spheres of radius $\lfloor \frac{d-1}{2}\rfloor$ centered at the sequences in $S$ is a disjoint union.
\end{proof}

\section{Conclusion and future work}\label{sec:5}
We have seen how the notion of weight of vectors can be extended to the notion of linear complexity of finite sequences. Using the new metric defined by the linear complexity, we developed a coding theory for finite sequences. We gave the Singleton bound and we presented a construction for an optimal set of sequences reaching this bound. Then we computed an exact formula for the number of finite sequences which can be generated by a LFSR of a fixed order. 

LFSR have been extensively studied \cite{Rue86}. It is widely used in the generation of random secret key in symmetric cryptography. Our main goal however is to use the LFSR and linear complexity to get a new protocols for asymmetric public key cryptography. 

In 1978, McEliece proposed a new cryptosystem using linear codes (Goppa codes) and Hamming metric \cite{McE78}. After 40 years of cryptanalysis, the cryptosystem is still considered to be generally secure. However, the cryptosystem requires the use of public keys with large size. This makes it impractical for daily use. The advantage of using linear codes is that cryptosystem based on them are in general safe against the quantum computers. Namely, there is no general algorithm which can decode a random linear code in polynomial time.

The strength of the McEliece cryptosystem is that the Goppa codes look like random linear codes and it is considered to be a difficult problem to decode a random linear code. 
To solve the problem with the key size, it was suggested to use different family of linear codes. For instance, Niederreiter proposed a new cryptosystem using Reed-Solomon codes \cite{Nie86}. However, cryptosystems using Reed-Solomon codes were proven to be insecure \cite{SS92}. Several types of codes were suggested to get a secure cryptosystem. Another suggestion was that, instead of using the classical Hamming metric on the linear code, one use the rank metric. For instance, a new cryptosystem based on the Gabidulin codes were proposed \cite{GPT91}. This system was still proven to be insecure \cite{Ove08}. 

Recently, this increased the interest in the search of linear codes with good properties which can be used in cryptography both in Hamming and rank metric. There is another cryptosystem which are also using a set and a metric on the set. The lattice based cryptosystem  is the scheme where the metric is the Euclidean distance \cite{Ajt96}. This particular cryptosystem is also resistant against attacks from quantum computers.

Motivated by all of this, we may think of a cryptosystem using the linear complexity as metric. We are working in this direction using the metric from linear complexity and this will be part of a future publication. Finally we all know what Hamming metric codes are good for error correcting in a $q$-ary symmetric channel. For rank metric codes, they have good application in network coding \cite{KK08,SKK08}. It is our hope that the presented coding framework will also be of use for some particular channel.
 
\section*{Aknowledgement}

I would like to thank Prof. Joachim Rosenthal for his valuable comments and suggestions on this work.

%%%%%%%%%%%%%%%%%%%%%%%%%%%%%%%%%%% reference %%%%%%%%%%%%%%%%%%%%%%%%%%%%%%%%%%%%%%%%%%
\FloatBarrier
\bibliography{reference}

\end{document}